\newcommand{\be}{\begin{equation}}
\newcommand{\bea}{\begin{eqnarray}}
\newcommand{\eea}{\end{eqnarray}}
\newcommand{\ee}{\end{equation}}
\newcommand{\qdet}{{\rm qdet}}
\newcommand{\qM}{\mathfrak{M}}
\newcommand{\clT}{T}
\newcommand{\opT}{\mathbf{T}}
\newcommand{\clM}{M}
\newcommand{\opM}{\mathbf{M}}
\newcommand{\clS}{S}
\newcommand{\opS}{\mathbf{S}}
\newcommand{\Sact}{A}
\newcommand{\Sg}{G}
\newcommand{\clU}{U}
\newcommand{\opU}{\mathbf{U}}
\newcommand{\ri}{{\mathsf i}}
\newcommand{\rd}{{\mathsf d}}
\newcommand{\Yang}{Y}
\newcommand{\ER}{Einstein-Rosen model}
\newcommand{\PPR}{Paternain, Perasa and Reisenberger}
\newtheorem{proposition}{Proposition}[section]
\newtheorem{definition}{Definition}[section]
\numberwithin{equation}{section}
\author{Boris A. Runov}
\title{On quantum determinants in integrable quantum gravity}
\begin{document}
\baselineskip 16pt

\medskip
\begin{center}
\begin{Large}\fontfamily{cmss}
\fontsize{17pt}{27pt}
\selectfont
	\textbf{On quantum determinants in integrable quantum gravity}
	\end{Large}

\bigskip \bigskip
\begin{large} 
B. Runov$^{a, b}$\footnote{e-mail:boris.runov@concordia.ca}
 \end{large}
 \\
\bigskip
\begin{small}
$^{a}${\em Centre de recherches math\'ematiques, Universit\'e de Montr\'eal, \\C.~P.~6128, succ. centre ville, Montr\'eal, QC H3C 3J7  Canada}\\
$^{b}${\em Department of Mathematics and Statistics, Concordia University\\ 1455 de Maisonneuve Blvd.~W.~Montreal, QC H3G 1M8  Canada}
\end{small}
 \end{center}
\medskip
	\begin{abstract}
		Einstein-Rosen waves with two polarizations are cylindrically symmetric solutions to vacuum Einstein equations. 
		Einstein equations in this case reduce to an integrable system. In 1971, Geroch has shown that this system 
		admits an infinite-dimensional group of symmetry transformations known as the Geroch group. 
		The phase space of this system can be  parametrized by a matrix-valued function of spectral parameter, called monodromy matrix.
		The latter admits the Riemann-Hilbert factorization into a pair of transition matrices, i.e. matrix-valued functions of spectral parameter such that 
		one of them is holomorphic in the upper half-plane, and the other is holomorphic in the lower half-plane.
		The classical Geroch group preserves the determinants of transition and monodromy matrices by construction.

		The algebraic quantization of the quadratic Poisson algebra generated by transition matrices of 
		Einstein-Rosen system  was proposed by Korotkin and Samtleben in 1997.
		Paternain, Perasa and Reisenberger have recently suggested a quantization of the Geroch group, 
		which can be considered as a symmetry of the quantum algebra of observables.
		They have shown that  commutation relations involving quantum monodromy matrices are preserved 
		by  the action of the  quantum Geroch group. 
		
		In present paper we introduce the notion of the determinant of the quantum monodromy matrix.
		We derive a factorization formula expressing the quantum determinant of the monodromy matrix as a 
		product of the quantum determinants of the transition matrices.
		The action of the quantum Geroch group is extended from the subalgebra generated by the monodromy matrix
		onto the full algebra of observables.
		This extension is used to prove that the quantum determinant of the quantum monodromy matrix is  invariant
		under the action of quantum Geroch group.
	\end{abstract}
\section{Introduction}
The Einstein-Rosen  model with two polarizations
describes cylindrically symmetric vacuum solutions in general relativity.
We shall refer to this model simply as \ER\  throughout this paper, using the term reduced Einstein-Rosen model for the system with one polarization originally discussed by Einstein and Rosen.
The reduced Einstein-Rosen model  can be considered \cite{AP1,Ku1} as an exactly solvable midi-superspace model of quantum gravity.
The full Einstein-Rosen model has  recently attracted renewed interest \cite{R3,R2,R1}  as it's Poisson structure was demonstrated to be closely related to Poisson brackets of null initial data for the full vacuum general relativity  in four dimensions. 
Therefore the quantization of the \ER\ might lead to  insights into more general quantum general relativity.

On classical level, integrability of the Einstein-Rosen model was  established by Belinsky and Zakharov in \cite{BZ},
and independently by  Maison \cite{Ma3}.
The group of non-commutative symmetries of the model was discovered by Geroch \cite{G1} and further explored Breitenlohner and Maison \cite{Ma2} and other authors.
The Poisson structure of \ER\ was explored in a series of works \cite{KS1,KS2,KS3,KS4,NKS1,NKS2,NK1}. 
The phase space of the system can be parametrized by so-called {\it transition matrices}, i.e. two one-parametric families of $SL(2,\mathbb{C})$-valued constants of motion $\clT_{\pm}(u)$, where complex parameter $u$ is called the {\it constant spectral parameter}. 
The physical content of the theory is encoded into the monodromy matrix originally introduced in \cite{Ma1}
( referred to as ``deformed metric'' in  \cite{R1}), which is the product of transition matrices computed at real values of spectral parameter.
It can be shown that the monodromy matrix can be identified with a submatrix of the metric of the four-dimensional spacetime at the symmetry axis\cite{Ma1}.

In \cite{KS1,KS2} the authors suggested a quantum deformation of the Poisson structure 
by proposing a set of exchange relations defining the quantum algebra of observables which quantize the classical Poisson algebra generated by transition matrices.

To complete the quantization program one has to specify a 
``physically acceptable'' (in the main body of the paper we discuss the meaning of the term)
representation of the algebra, which proved to be an elusive task. The representation  originally proposed by Korotkin and Samtleben \cite{KS2} doesn't possess a positive-definite scalar product.
On the other hand, in the alternative representation suggested in  the subsequent work by Niedermaier and Samtleben \cite{NS1} the quantum metric becomes non-symmetric, while the problem of positive-definitness remains unsolved.

Another approach to the problem of constructing the physically meaningful representation was suggested by \PPR\ \cite{R1}. 
Their approach is based on the fact that the Geroch group, which is a group of dressing transformations of the classical model, 
acts transitively on the space of solutions.
The transformations from the Geroch group are not canonical. Instead, the  Geroch group is a Lie-Poisson group.
Therefore, the elements of the Geroch group belong to an auxiliary phase space equipped with its own non-trivial Poisson bracket.
In accordance with the general theory of the quantum groups \cite{B1,BB1,BB2}, the quantization of the Geroch group should be a quasitriangular Hopf algebra.
In the approach proposed in \cite{R1} one defines the representation of the original algebra using the orbits of  quantum Geroch group.
The classical Geroch group is known to act transitively on the space of solutions with appropriate asymptotic behaviour \cite{Ma2}. 
Therefore it is natural to assume that the quantum analogue of the Geroch group would have the entire hypothetical Hilbert space as its orbit. 
The quantum analogue of the Lie-Poisson action constructed in \cite{R1} is an isomorphism of the algebra of observables. However, to be a genuine symmetry the quantum Geroch group must also preserve the essential properties of the representation which arise from the classical limit. In particular, on the real line the quantum monodromy matrix should be symmetric and positive-definite and its components must be Hermitian. Furthermore, the quantum determinants of the quantum transition matrices $\opT_{\pm}(u)$ must coincide with the identity operators for all $u$ in $\mathbb{C}$. Finally, the transition matrix $\opT_{+}(u)$ ($\opT_{-}(u)$) must me holomorphic in upper (resp. lower) half-plane.

The authors of \cite{R1} have proven that their candidate for the quantum Geroch transformation respects most of the key properties of the quantum monodromy matrix, such as its exchange relation with itself, positive-semidefinitness, 
invariance with respect to conjugation and the vanishing of antisymmetric part. 
However, the paper \cite{R1} does not contain a definition of the action of the quantum Geroch group on individual transition matrices $\opT_{\pm}$. Therefore, the compatibility of the quantum Geroch group symmetry with the requirement that the quantum determinants of $\opT_{\pm}$ coincide with the identity operator (which is an important property of the hypothetical physically meaningful representation) remains unchecked.
In this paper we address this issue. 

We introduce the quantum analogue of determinant of the monodromy matrix as follows.
\begin{definition}
The regularized quantum determinant of $\opM(u)$ is defined by the identity
\be
	\qdet \opM(u)\delta_{ij} =\frac{2}{3}{\rm Res}_{s=0}\epsilon_{jc}\epsilon_{ab} \opM^{ia}(u) \opM^{cb}(u+\ri\hbar+\ri s)\,,
\ee
where $\epsilon_{ab}$ is the completely antisymmetric tensor and the summation over repeated indices is assumed.
\end{definition}
The main result of the paper is  the following factorization formula for $\qdet M(u)$:
\be
\label{qdetMFact}
	\qdet \opM(u)= \qdet \opT_{+}(u)\, \qdet \opT_{-}(u)\,.
\ee
The quantum determinants of the transition matrices in the right hand side are central elements of the algebra of observables defined as
\be
	\qdet \opT_{\pm}(u)=\opT^{11}_{\pm}(u)\opT^{22}_{\pm}(u+\ri\hbar)-\opT^{12}_{\pm}(u)\opT^{21}_{\pm}(u+\ri\hbar)\,.
\ee
The identity (\ref{qdetMFact}) implies that the quantum determinant of the monodromy matrix belongs to the center of the  algebra of observables.
In physically relevant representation one must have
\be
	\qdet \opM(u)=1\,.
\ee
The relation (\ref{qdetMFact}) also means that the invariance of the regularized quantum determinant of the monodromy matrix under the action of the quantum Geroch group is a necessary condition of the invariance of the quantum determinants of the transition matrices. 
The latter statement partially answers the question posed in paper \cite{R1}.

We extend the action of the quantum Geroch group of \cite{R1} to the full algebra of observables as follows:
\bea
\label{GerochTpInt}
	\opT_{+}(u)\mapsto \tilde{\opT}_{+}(u)&:=&\opS(u\pm\textstyle{\frac{\ri}{2}}\hbar) \opT_{\pm}(u)\check{\opU}(-u+\textstyle{\frac{\ri}{2}}\hbar)\,,\\[2mm]
\label{GerochTmInt}
\opT_{-}(u)\mapsto \tilde{\opT}_{-}(u)&:=& \opS(u\pm\textstyle{\frac{\ri}{2}}\hbar) \opT_{\pm}(u)\opU^{T}(-u-\textstyle{\frac{\ri}{2}}\hbar)\,.
\eea
This transformation quantizes the action of the classical Geroch group, which is given by
\bea
\label{GerochTpClI}
	\clT_{+}(u) &\mapsto& \tilde{T}_{+}(u):=\clS(u)\clT_{+}(u)\clU^{-1}\left(\clS,\clT_{\pm}\mid u\right)\,,\\ 
[2mm] 
\label{GerochTmClI}
T_{-}(u) &\mapsto& \tilde{\clT}_{-}(u):=\clS(u)\clT_{-}(u)\clU^{T}\left(\clS,\clT_{\pm} \mid u\right)\,.
\eea
In the formulae (\ref{GerochTpClI}),(\ref{GerochTmClI}) symbol $\clS(u)$ denotes an element of the classical Geroch group parametrized by a $SL(2,\mathbb{R})$-valued function of the constant spectral parameter on the real line.
The symbol $\clU\left(\clS,\clT_{\pm} \mid u \right)$ denotes another $SL(2,\mathbb{R})$-valued function of the spectral parameter, which is also a functional of the transition matrices and the Geroch group element.
The superscript $T$ denotes the transposition.
The matrix-valued function $\clU(\clS,\clT_{\pm} \mid u)$ is fixed by the requirement that the transformed transition matrices $\tilde{\clT}_{\pm}(u)$ are analytical in the same half-planes as the untransformed ones and have the same leading term of the asymptotic expansion at large values of the spectral parameter.

The quantum matrices $\opS(u)$ and $\opU(u)$ in the formulae (\ref{GerochTpInt}),(\ref{GerochTmInt}) obey the exchange relations of $\mathfrak{sl}(2)$ Yangian:
\bea
	R(u-v)\overset{1}{\opS}(u)\overset{2}{\opS}(v)&=&\overset{2}{\opS}(v)\overset{1}{\opS}(u) R(u-v)\,,\\
	R(u-v)\overset{1}{\opU}(u)\overset{2}{\opU}(v)&=&\overset{2}{\opU}(v)\overset{1}{\opU}(u) R(u-v)\,.
\eea
The check denotes the ``quantum comatrix'', i.e.
\be
	\check{\opU}^{ij}(u)=\epsilon_{ia}\epsilon_{jb}\opU^{ji}(u)=\qdet \opU(u) \left(\opU^{-1}(u-\ri\hbar)\right)^{ij}\,.
\ee
The entries of $\opS(u)$ and $\opU(u)$ act not on the Hilbert space $\mathcal{H}$ but on their own respective phase spaces $\mathcal{G}$ and $\mathcal{K}$ so that the following commutation relations are respected
\bea
\label{SUcommInt}
	[\opS^{ij}(u),\opT^{kl}_{\pm}(v)] &=& 0 \,, \\[1mm]
	[\opS^{ij}(u),\opU^{kl}_{\pm}(v)] &=& 0 \,, \\[1mm]
	[\opU^{ij}(u),\opT^{kl}_{\pm}(v)] &=& 0 \,.
\eea
The operator $\opU(u)$ is determined by the operators $\opS(u)$ and $\opT(u)$. However, the explicit relation between them is unknown to the author.
Unlike the classical  matrices $\tilde{\clT}_{\pm}(u)$, the quantum operators $\tilde{\opT}_{\pm}(u)$ can't be
made analytical in their respective half-planes (see Proposition \ref{prop_nogo_hol}).
The best we can hope for is  that the representation of the algebra generated by transformed transition matrices on the space $\,\mathcal{G}\otimes\mathcal{H}\otimes\mathcal{K}\,$ is decomposable into a direct sum of several other representations, of which at least one is ``physically acceptable''. In other words, in some appropriate basis the transformed transition matrices are block-diagonal, and some blocks possess the required analytical properties. We hypothesize that the operator $\opU(u)$ can be fixed completely by requiring that such decomposition is possible.
%

We show that the transformation (\ref{GerochTpInt}),(\ref{GerochTmInt}) preserves the quantum determinants of the transition matrices. Then the factorization formula (\ref{qdetMFact}) implies that the quantum Geroch group also preserves the  quantum determinant of the monodromy matrix.
However,  the derivation of the identity (\ref{qdetMFact}) relies upon the analytical properties of the quantum transition matrices and their products in ``physically acceptable'' representations.
To prove that there exists a representation of the Yangian such that with the operator $\opU(u)$ in this representation the transformed transition matrices $\opT_{\pm}(u)$ have regular blocks described above one needs to know more about representations of the algebra of observables $\mathcal{D}$ and the Yangian generated by the operator $\opS(u)$. We hope to address this problem in future works.


The paper is organized as follows.
In section \ref{section_class} we review the integrable structure of the classical \ER\ following \cite{KS1} and \cite{R1}. In section \ref{section_rep} we describe the quantum algebra of observables and formulate restrictions on its physically relevant representation.
In section \ref{section_qdet} we introduce the quantum determinant of the monodromy matrix and derive the factorization formula (\ref{qdetMFact}). In section \ref{section_geroch} a generalization of the action of the quantum Geroch group of \PPR\ \cite{R1} to the full algebra of observables is proposed.

\section{Integrable structure of the classical \ER}
\label{section_class}
\subsection{Equations of motion and Poisson structure}
Let us consider the metric of the form
\be
	\rd s^2=e^{\Gamma(\rho,\tau)}\left(-d\tau^2+d\rho^2\right)+g_{ab}(\rho,\tau)\rd \bar{x}^a \rd \bar{x}^b\,,\quad a,b=1,2
\ee
where the function $\Gamma$ and the $SL(2,\mathbb{R})$-valued symmetric matrix $g$ are independent of $\bar{x}_1,\bar{x}_2$. Then the vacuum Einstein equations  imply
that the matrix $g$ satisfies the following equation
\be
\label{ErnstEq}
	\partial_{\rho}\left(\rho g^{-1}\partial_{\rho}g\right) -\partial_{\tau}\left(\rho g^{-1}\partial_{\tau}g\right)=0\,,
\ee
and that the conformal factor $\Gamma(\rho,\tau)$ is given by
\be
\label{ErnstH}
	\Gamma(\rho,\tau)=\frac{1}{2}\int\limits_{0}^{\rho}\rho^{\prime}d\rho^{\prime} 
	\text{Tr}\left(\left(g^{-1}\partial_{\rho^{\prime}}g\right)^2+\left(g^{-1}\partial_{\tau}g\right)^2\right)\,.
\ee
The solutions of the equations (\ref{ErnstEq}),(\ref{ErnstH}) are known as the Einstein-Rosen waves with two polarizations.
The equation (\ref{ErnstEq}) can be derived from the following action:
\be
\label{ActG}
	\Sact=\frac{1}{2}\int\limits_{0}^{\infty}\rho d\rho 
	\text{Tr}\left(\left(g^{-1}\partial_{\rho^{\prime}}g\right)^2-\left(g^{-1}\partial_{\tau}g\right)^2\right)\,.
\ee
The action (\ref{ActG}) differs  from the action of the non-linear sigma model with $PSL(2,\mathbb{R})$ symmetry only by the coordinate-dependent factor $\rho$. Therefore one can think of this model as of a non-autonomous generalization of the sigma model. 
We will refer to this theory as the \ER.
A similar metric with a different signature  describes the cosmological Gowdy model. Integrable structure outlined below is also present there with minor modifications \cite{KS1}.

While the equation (\ref{ErnstEq}) is written in terms of the matrix $g$, which is a submatrix of the metric,
it is convenient to rewrite this equation in terms of the zweibein to stress the similarity to coset sigma models.
 To do this, we introduce the flat connection $J$ as follows:
\be
	J_{\mu}=\mathcal{V}^{-1}\partial_{\mu} \mathcal{V}\,.
\ee
The components of the connection $J$ take values in the algebra $\mathfrak{g}=\mathfrak{sl}_2(\mathbb{R})$, which can be decomposed into a direct sum of the algebra $\mathfrak{h}=\mathfrak{so}_2(\mathbb{R})$ and the orthogonal (i.e. symmetrized) component $\mathfrak{l}$
\be
	\mathfrak{g}=\mathfrak{h}\oplus\mathfrak{l}\,.
\ee
Then the components of the current $J$ can be decomposed as
\be
	J_{\mu}=P_{\mu}+Q_{\mu}\,,\quad P_{\mu}=\frac{1}{2}\left(J_{\mu}+J_{\mu}^{T}\right)\in\mathfrak{l}\,,\;Q_{\mu}=\frac{1}{2}\left(J_{\mu}-J_{\mu}^{T}\right)\in\mathfrak{h}\,.
\ee
The equation (\ref{ErnstEq}) then takes the form
 \be
	D_{\mu}(\rho P_{\mu})=0\,,\quad D_{\mu}P_{\nu}=\partial_{\mu}P_{\nu}-[P_{\mu},Q_{\nu}]\,.
 \ee
The action (\ref{ActG}) in this parametrization looks as follows:
 \be
 \label{actCur}
	\Sact=\int d\tau \int_{0}^{\infty}\rho d\rho \left(P_0^2-P_1^2\right)\,.
\ee
Expressing the components of the current $J$ in terms of the zweibein one finds
\be
\label{ActV}
	\Sact=\frac{1}{2}\int\limits_{0}^{\infty}\rho d\rho {\rm Tr} \left[D_{\mu}\mathcal{V}\mathcal{V}^{-1}D^{\mu}\mathcal{V} \mathcal{V}^{-1}\right]\,.
\ee
The action written in terms of the zweibein is invariant under local $SO(2)$ transformations. The physical fields therefore take values in the coset space $SL(2,R)/SO(2)$.

The components of the metric $g_{ab}$ are given by
\be
\label{met_zweibein}
	g_{ab}=\mathcal{V}_{a}^{i}\mathcal{V}_{b}^{j}\delta_{ij}.
\ee

As it was discovered in \cite{BZ,Ma1} the equation (\ref{ErnstEq}) is integrable in the sense of theory of solitons (i.e. it possesses a zero curvature representation). Namely, consider the following deformation of the connection $J$ which  also turns out to be flat on solutions of equations of motion
\be
\label{defcurr}
	\hat{J}_{\mu}=Q_{\mu}+\frac{1+\gamma^2}{1-\gamma^2}P_{\mu}+\frac{2\gamma}{1-\gamma^2}\epsilon_{\mu\nu}P^{\nu}
\ee
provided the parameter $\gamma$, which is called the {\it dynamical spectral parameter}, depends on spacetime coordinates and the {\it constant spectral parameter} $u$ as follows
\be
\label{gammaDef}
	\gamma(\rho,\tau,u)=\frac{1}{\rho}\left(u+\tau+\sqrt{(u+\tau)^2-\rho^2}\right)\,.
\ee
For a fixed point in the spacetime, the dynamical spectral parameter is in fact a function on a two-sheeted covering of the Riemann sphere.
Either choice of the branch of the square root in the definition (\ref{gammaDef}) results in a flat connection $\hat{J}$ as long as the choice is consistent, i.e. the dynamical spectral parameter is a smooth function of the spacetime coordinates for $\rho>0$.

The solution $\hat{\mathcal{V}}(\rho,\tau,u)$ of the linear problem
\be
	\partial_{\mu}\hat{\mathcal{V}}=\hat{\mathcal{V}}\hat{J}
\ee
can be considered the ``deformed zweibein''. We normalize the matrix $\hat{\mathcal{V}}$ as follows:
\be
	\hat{\mathcal{V}}({\bf x},u)=\mathcal{V}({\bf 0})
	\mathcal{P} \exp\left[\int\limits_{{\bf 0}}^{{\bf x}}\hat{J}_{\mu}({\bf y},u) \rd y^{\mu}\right]\,,
\ee
where
\be
	{\bf x}=(\rho,\tau)\,,\quad {\bf 0}=(0,0)\,.
\ee
Similarly to (\ref{met_zweibein}) one can construct the ``deformed metric'' (borrowing the terminology of \cite{R1}) from the ``deformed zweibein'':
\be
	\clM({\bf x},u)=\hat{\mathcal{V}}({\bf x},u)\kappa\left(\hat{\mathcal{V}}({\bf x},u)\right)
\ee
where the involution $\kappa$ interchanges the sheets of the Riemann surface on which the dynamical spectral parameter $\gamma$ is defined,
i.e.
\be
	\kappa(\gamma({\bf x},u))=
	\frac{1}{\rho}\left(u+\tau-\sqrt{(u+\tau)^2-\rho^2}\right)= \gamma^{-1}({\bf x},u)\,.
\ee
As it is explained in \cite{R1}, one can prove that $\clM({\bf x},u)$ does not in fact depend on ${\bf x}$.
So we will call it the {\it monodromy matrix} and denote it by $\clM(u)$.

Korotkin and Samtleben \cite{KS1,KS2} found two families of conserved quantities $T_{\pm}(u,\tau)$
\be
\label{Tdefcl}
	\clT_{\pm}(u,\tau)=\lim\limits_{\epsilon\rightarrow 0}
	\mathcal{P}
	\exp \left\lbrace 2 \int\limits \rd\rho\,
	\left(\frac{\gamma^2_{\pm}g^{-1}\partial_{\rho}g}{1-\gamma^{2}_{\pm}}-\frac{\gamma^{2}_{\pm}g^{-1}\partial_{\tau}g}{1-\gamma^{2}_{\pm}}\right)
	\right\rbrace\,,
\ee
with 
\be
	\gamma_{\pm}=\gamma(\rho,\tau,u\pm i\epsilon)\,.
\ee
Since the total  derivative of $\clT_{\pm}(u,\tau)$ with respect to the time $\tau$ vanishes  due to the flatness of $\hat{J}$ \cite{KS1},
we shall denote these matrices by $\clT_{\pm}(u)$.
The matrices $\clT_{\pm}(u)$ solve the Riemann-Hilbert problem for the jump matrix $\clM(u)$ on the real line \cite{KS1}, 
i.e. they provide a factorization of the matrix $\clM(u)$
\be
	\clM(u)=\clT_{+}(u)\clT_{-}^{T}(u)
\ee
such that $\clT_{+}$ is holomorphic in the upper half-plane, and $\clT_{-}$ is holomorphic in the lower half-plane.
The set of matrices $\clT_{\pm}(u)$ completely determines the metric. 
The equations of motion (\ref{ErnstEq})  imply that
\be
	\frac{d}{d\tau} \clT_{\pm}(u)=0\,.
\ee

The Poisson brackets derived from the action (\ref{ActG}) ignoring the constraint $\text{det} g=1$ look like
\be
	\{ g_{ab}(\rho), \pi_{cd}(\rho')\}=\delta_{ac}\delta_{bd}\delta(\rho-\rho')
\ee
with canonical momenta given by
\be
	\pi_{ab}(\rho)=\rho \left(g^{-1}\partial_{\tau}g g^{-1}\right)_{ab}\,.
\ee
Taking the constraints into account one can compute the Poisson brackets between the components of the current. 
The non-vanishing brackets  are given by
\bea
\label{PbCurr}
	\{\overset{1}{P}_0(\rho),\overset{2}{P}_1(\rho')\} &=& \rho^{-1}\,[\Omega_{\mathfrak{l}}, \overset{2}{Q}_1(\rho)]\,\delta(\rho-\rho')+
	\rho^{-1}\,\partial_{\rho}\delta(\rho-\rho')\,\Omega_{\mathfrak{l}}\, \\[1mm]
	\{\overset{1}{P}_0(\rho),\overset{2}{Q}_1(\rho')\} &=& \rho^{-1}\,[\Omega_{\mathfrak{l}}, \overset{2}{P}_1(\rho)]\,\delta(\rho-\rho')\,,
\eea
where the following notation was adopted for the Poisson brackets between matrices
\be
	\{ \overset{1}{A},\overset{2}{B}\}^{ab,cd}=\{A^{ab},B^{cd}\}\,,
\ee
and the symbol $\Omega_{\mathfrak{l}}$ is given by:
\be
	\Omega_{\mathfrak{l}}=\mathbf{1}\otimes\mathbf{1}+\sigma_{x}\otimes\sigma_{x}+\sigma_{z}\otimes\sigma_{z}\,.
\ee
From the Poisson brackets  (\ref{PbCurr}) we can derive the Poisson brackets between the transition matrices:
\be
\label{PBsame}
	\{\overset{1}{\clT_{\pm}}(u),\overset{2}{\clT}_{\pm}(v)\}=[r(u-v),\overset{1}{\clT}_{\pm}(u)\overset{2}{\clT}_{\pm}(v)]\,,
\ee
\be
\label{PBtwist}
	\{\overset{1}{\clT}_{\pm}(u),\overset{2}{\clT}_{\mp}(v)\}= \left(r(u-v)\overset{1}{\clT}_{\pm}(u)\overset{2}{ \clT}_{\mp}(v)
	-\overset{1}{\clT}_{\pm}(u)\overset{2}{\clT}_{\mp}(v)r^{\eta}(u-v)\right)\,.
\ee

The brackets (\ref{PBsame}),(\ref{PBtwist})  have $r$-matrix structure with rational classical $r$-matrix
\be
\label{cRmat}
	r(u)=\frac{1}{u}E_{ab}\otimes E_{ba}\,,\quad r_{12}^{\eta}=\frac{1}{u}I- r_{12}^{T}(u)
\ee
satisfying the classical Yang-Baxter equation
\be
\label{cYBE}
	[\overset{12}{r}(u_1-u_2),\overset{13}{r}(u_1-u_3)]+[\overset{23}{r}(u_2-u_3),\overset{13}{r}(u_1-u_3)]+[\overset{12}{r}(u_1-u_2),\overset{23}{r}(u_2-u_3)]=0\,.
\ee
The equation (\ref{cYBE}) is defined on the tensor product of three identical spaces $V\otimes V \otimes V$ with $V=\mathbb{C}^2$, and the indices on top of the symbols $r$, $r^{\eta}$ denote the pairs of  spaces on which the respective matrix acts non-trivially. The symbol $T$ in (\ref{cRmat}) denotes the transposition with respect to  one of the spaces.
\subsection{Geroch group}
The space of solutions of (\ref{ErnstEq}) is isomorphic to the space of admissible boundary values of metric on the symmetry axis, 
which within the construction of \cite{KS1} are given by
\be
	g(0,u)=\clM(u)\,.
\ee
The admissible boundary values are regular functions on the real axis taking values in $2\times2$ symmetric real matrices with unit determinant.
Any admissible boundary value can be obtained from the trivial with help of the following transformation 
\be
\label{GerochMclI}
	g(0,u)\mapsto \tilde{g}(0,u):=\clS(u)g(0,u)\clS^{T}(u)\,,
\ee
where the parameter of the transformation $\clS(u)$ is  matrix-valued function of a real argument. 
The function $\clS(u)$ takes values in $SL(2,\mathbf{R})$ and tends to the identity matrix for large absolute values of $u$.
The transformations (\ref{GerochMclI}) with different choices of the function $\clS(u)$ obviously form a group.
The action of the transformation (\ref{GerochMclI}) on boundary values can be obviously extended to the action on fields such that any solution maps to a solution. 
This group of transformations is known as the Geroch group, and it is a particular example of a dressing transformation commonly found in infinite-dimensional classical integrable models.
It is not a symmetry in the usual sense of the Hamiltonian mechanics: it does not preserve the Hamiltonian. Contrarily, it is a Lie-Poisson group.

The infinitesimal action of the Geroch group on an  arbitrary observable  can be represented as follows.
Let the contour $l_{+}$ be the line $\Im(u)=\varepsilon$, and the contour $l_{-}$ be the line $\Im(u)=-\varepsilon$ for some small positive value of $\varepsilon$.
Let us define the adjoint action of the transition matrices by
\be
	\text{ad}_{\clT_{\pm}(w)}X=\{\clT_{\pm}(w),X\}\,.
\ee
Let $\Lambda(w)$ be a regular $\mathfrak{sl}(2,\mathbb{R})$-valued function of a real argument representing an element of the Lie algebra of the Geroch group.
Then the action of the algebra element $\Lambda(w)$ on an arbitrary observable is given by
\be
\label{GerochActArb}
	\Sg[\Lambda]=\lim\limits_{\varepsilon\rightarrow+0}\text{Tr}\left(
	\int\limits_{l_{+}}\clT_{+}^{-1}(w)\Lambda(w)\text{ad}_{\clT_{+}(w)}dw
	+\int\limits_{l_{-}}\clT_{-}^{-1}(w)\Lambda(w)\text{ad}_{\clT_{-}(w)}dw
	\right)\,.
\ee

The formula (\ref{GerochActArb})  implies \cite{KS1} that classically the Geroch group acts on the monodromy matrix $\clM(u)$ as 
\be
\label{GerochActMinf}
	\Sg[\Lambda]\clM(u)=\Lambda(u)\clM(u)+\clM(u)\Lambda^{T}(u)\,.
\ee
The action (\ref{GerochActMinf})  can obviously be exponentiated to
\be
\label{GerochMCl}
	\clM(u)\mapsto \tilde{\clM}(u):=\clS(u)\clM(u)\clS(u)^{T}\,,
\ee
where the matrix $\clS(u)$ belongs to $SL(2,\mathbb{R})$.
The action of the Geroch group on  the transition matrices can be written as
\bea
	\clT_{+}(u) &\mapsto& \tilde{T}_{+}(u):=\clS(u)\clT_{+}(u)\clU^{-1}\left(\clS,\clT_{\pm}\mid u\right)\,,\\[1mm]
	\clT_{-}(u) &\mapsto& \tilde{T}_{-}(u):=\clS(u)\clT_{-}(u)\clU^{T}\left(\clS,\clT_{\pm}\mid u\right)\,,
\eea
where the ``gauge'' factor $\clU\left(\clS,\clT_{\pm}\mid u\right)$ is a functional of the transition matrices and the element of the Geroch group. The matrix $\clU\left(\clS,\clT_{\pm} \mid u \right)$ is implicitly fixed by the following requirements.
Firstly, it is a function of a real argument which takes values in the group $SL(2,\mathbb{R})$. Secondly, the transformed transition matrices $\tilde{\clT}_{\pm}(u)$ admit analytical continuation into the same respective half-planes as the original ones.
Finally, the matrix-valued function $\clU\left(\clS, \clT_{\pm} \mid u\right)$ tends to the identity matrix at large values of the spectral parameter $u$.

The action of the Geroch group preserves the Poisson structure in the following sense: if we consider  independent elements of the  matrices $\clS(u)$
to be dynamical variables with a nontrivial Poisson structure
\be
	\{\overset{1}{\clS}(u),\overset{2}{\clS}(v)\}=[r(u-v),\overset{1}{\clS}(u)\overset{2}{\clS}(v)]\,,
	\quad \{\overset{1}{S}(u),\overset{2}{\clT}_{\pm}(v)\}=0
\ee
then the Poisson bracket between the transformed matrices $\tilde{\clT}_{\pm}$ would be given by the same formulae (\ref{PBsame}),(\ref{PBtwist}). 
\section{Operator products and properties of representation}
\label{section_rep}
In \cite{KS1} the authors suggested a consistent way to quantize the Einstein-Rosen model. 
The Poisson brackets (\ref{PBsame}),(\ref{PBtwist}) should be replaced by the following exchange relations:
\be
\label{RTTint}
	R(u-v) \overset{1}{\opT}_{\pm}(u)\overset{2}{\opT}_{\pm}(v)=\overset{2}{\opT}_{\pm}(v)\overset{1}{\opT}_{\pm}(u) R(u-v)\,,
\ee
\be
\label{Rpmint}
	R(u-v-\ri\hbar) \overset{1}{\opT}_{-}(u) \overset{2}{\opT}_{+}(v) =\overset{2}{\opT}_{+}(v) \overset{1}{\opT}_{-}(u) R^{\eta}(u-v+\ri\hbar) \chi(u-v)\,, 
\ee
where the scalar factor $\chi$ reads
\be
	\chi(u)=\frac{u(u-2\ri\hbar)}{u^2+\hbar^2}\,,
\ee
and the arguments in (\ref{Rpmint}) satisfy
\be
	u\notin \{v-\ri\hbar,v,v+\ri\hbar,v+2\ri\hbar\}\,.
\ee
The indices $1,2$ on top of the symbols $\opT_{\pm}$ above denote the space on which the corresponding matrix acts non-trivially:
\be
	\overset{1}{\opT}_{\pm}(u)=\opT_{\pm}(u)\otimes \mathbf{1}\,,\quad \overset{2}{\opT}_{\pm}(u)=\mathbf{1}\otimes \opT_{\pm}(u)\,.
\ee
The quantum $R$-matrix and its twisted version $R^{\eta}$ are given by
\be
	R(u,v)=(u-v)\left(\mathbf{1}-\ri\hbar r(u-v)\right)\,, \quad R^{\eta}(u,v)=-R^{T}(\ri\hbar-v,u)\,.
\ee
where the superscript $T$ denotes the transposition with respect to one of the spaces in the tensor product, i.e.
\be
	R^{T}_{ijkl}=R_{jikl}=R_{ijlk}\,.
\ee
The operators $\opT_{+}^{ij}(u)$ and $\opT_{-}^{ij}(u)$ are related via the Hermitian conjugation
\be
	\left(\opT_{+}^{ij}(u)\right)^{\dag}=\opT_{-}^{ij}(\bar{u})\,.
\ee
The algebra of observables is generated by the evaluations of the operator-valued functions $\opT_{\pm}^{ij}(u)$ on the real axis subject to the relations (\ref{RTTint},\ref{Rpmint}). 
Any polynomial in $\opT_{\pm}^{ij}(u)$, in particular the bilinear 
expressions of the form $\opT^{ij}_{+}(u)\opT^{kl}_{-}(v)$, is well-defined for any real value of $u$ and $v$, which  leads to the following identities
\be
\label{idealX}
	\opT^{ai}_{-}(u)\opT_{+}^{aj}(u)=0\,,
\ee
\be
\label{idealY}
	\opT^{1i}_{-}(u)\opT^{2j}_{+}(u)+\opT^{2i}_{-}(u)\opT^{2j}_{+}(u)=0\,.
\ee
The expressions in the left hand sides of the identities (\ref{idealX}), (\ref{idealY})
form an ideal and can be factored out.
Within the quotient algebra the relation (\ref{Rpmint}) can be extended to the point $u=v$.

We shall call this quotient algebra, following the paper \cite{KS1}, a ``twisted $\mathfrak{sl}(2)$ Yangian double'' and denote it by $\mathcal{D}$. The parameter $u$ will be referred to as the {\it spectral parameter}.

Similarly to the classical case one defines the {\it quantum monodromy matrix} as follows:
\be
	\opM(u)=\opT_{+}(u)\opT_{-}^{T}(u)\,.
\ee
The operators $\opM(u)$ form a subalgebra of the algebra $\mathcal{D}$, with the following exchange relations:
\be
	R(u-v)\overset{1}{\opM}(u)R^{\eta}(v-u+2\ri\hbar)\overset{2}{\opM}(v)=
	\overset{2}{\opM}(v)R^{\eta}(u-v+2\ri\hbar)\overset{1}{\opM}(u)R^{\eta}(v-u)\frac{\chi(u-v)}{\chi(v-u)}\,.
\ee

The following property of the algebra $\mathcal{D}$ was established in the work \cite{KS2}.
\begin{proposition}
	The {\it quantum determinants} of the transition matrices
\be
	\label{qdetDefB}
	\qdet \opT_{\pm}(u)=\opT^{11}_{\pm}(u)\opT^{22}_{\pm}(u+\ri\hbar)-\opT^{12}_{\pm}(u)\opT^{21}_{\pm}(u+\ri\hbar)
\ee
and the antisymmetric part of the quantum monodromy matrix
\be
	\opM^{12}(u)-\opM^{21}(u)
\ee
belong to the center of the algebra $\mathcal{D}$.
\end{proposition}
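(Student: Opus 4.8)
The plan is to run the standard antisymmetrizer-and-fusion argument familiar from $RTT$-algebras, adapted to the twisted double $\YD$. Write $P=\sum_{a,b}E_{ab}\otimes E_{ba}$ for the permutation on $\mathbb{C}^2\otimes\mathbb{C}^2$ and $A=\tfrac12(\mathbf{1}-P)$ for the antisymmetrizer, whose image is the one-dimensional space spanned by $e_1\otimes e_2-e_2\otimes e_1$. Since the $R$-matrix is $R(w)=w\,\mathbf{1}-\ri\hbar P$, it degenerates to a multiple of the \emph{rank-one} projector $A$ exactly at the spectral difference $w=\ri\hbar$, which is the shift appearing in the definition (\ref{qdetDefB}). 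The first step is to specialize (\ref{RTTint}) to this point: because $A$ has rank one, the resulting relation forces the antisymmetrized product of two transition operators at arguments differing by $\ri\hbar$ to be proportional to $A$, namely
\[
A\,\overset{1}{\opT}_{\pm}(u+\ri\hbar)\,\overset{2}{\opT}_{\pm}(u)=\qdet\opT_{\pm}(u)\,A\,,
\]
where the proportionality coefficient is a single element of the algebra that, after reordering by (\ref{RTTint}) and comparison on $e_1\otimes e_2-e_2\otimes e_1$, is identified with the expression (\ref{qdetDefB}).

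For centrality within the \emph{same-sign} subalgebra I would use the two-fold fusion in $V_0\otimes V_1\otimes V_2$. Applying (\ref{RTTint}) twice moves an auxiliary factor $\overset{0}{\opT}_{\pm}(w)$ through $\overset{1}{\opT}_{\pm}(u+\ri\hbar)\overset{2}{\opT}_{\pm}(u)$, at the cost of the product $R_{01}R_{02}$ on one side and $R_{02}R_{01}$ on the other. The mechanism is the fusion identity
\[
A_{12}\,R_{01}(w)\,R_{02}(w+\ri\hbar)=\phi(w)\,A_{12}=R_{02}(w+\ri\hbar)\,R_{01}(w)\,A_{12}\,,
\]
which expresses that the fused $R$-matrix acts as a \emph{scalar} $\phi(w)$ on the one-dimensional determinant representation $A_{12}(V_1\otimes V_2)$. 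Sandwiching the three-space $RTT$ chain between $A_{12}$ on the left, using $A_{12}\overset{0}{\opT}_{\pm}=\overset{0}{\opT}_{\pm}A_{12}$ and the antisymmetrized identity above to pull out $\qdet\opT_{\pm}(u)$, and finally cancelling the common scalar $\phi(w)$, yields $\overset{0}{\opT}_{\pm}(w)\,\qdet\opT_{\pm}(u)=\qdet\opT_{\pm}(u)\,\overset{0}{\opT}_{\pm}(w)$, i.e.\ commutation with every same-sign generator.

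The \emph{opposite-sign} case is where the real work lies and is the step I expect to be the main obstacle. Here the relevant relation is the cross relation (\ref{Rpmint}), in which one $R$ is replaced by the twisted $R^{\eta}$ and a scalar factor $\chi$ appears. I would establish the corresponding twisted fusion identities $A_{12}R^{\eta}_{0i}(\cdot)R^{\eta}_{0j}(\cdot)=(\text{scalar})\,A_{12}$ and then track carefully that the scalar prefactors produced on the two sides of the fused chain—including the ratio of $\chi$-factors—cancel. This requires respecting the analytic structure of $\chi(w)=w(w-2\ri\hbar)/(w^2+\hbar^2)$, whose zeros at $0,2\ri\hbar$ and poles at $\pm\ri\hbar$ dictate which specializations are admissible, and honouring the excluded set $u\notin\{v-\ri\hbar,v,v+\ri\hbar,v+2\ri\hbar\}$; centrality then holds only after descending to the quotient $\YD$ where the coincident-point relations (\ref{idealX}),(\ref{idealY}) apply. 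Once $\qdet\opT_{\pm}(u)$ is shown to commute with both $\opT_{+}^{kl}(w)$ and $\opT_{-}^{kl}(w)$, it is central in all of $\YD$.

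For the antisymmetric part of $\opM$ I would write, using $\opM=\opT_{+}\opT_{-}^{T}$,
\[
\opM^{12}(u)-\opM^{21}(u)=\epsilon_{ab}\opM^{ab}(u)=\epsilon_{ab}\,\opT_{+}^{ac}(u)\,\opT_{-}^{bc}(u)\,,
\]
a mixed antisymmetrized bilinear in $\opT_{+}$ and $\opT_{-}$ at coincident argument. I would prove its centrality by the same fusion mechanism applied to the cross relation (\ref{Rpmint}) at coincident points: the antisymmetric projection, combined with the coincident-point identities (\ref{idealX}),(\ref{idealY}), collapses the $R^{\eta}$-dressing to a scalar, so that commuting an extra generator through leaves the combination invariant. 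An equivalent route is to start from the displayed exchange relation for $\opM(u)$ and project with the antisymmetrizer at the spectral difference where the accompanying $R^{\eta}$ degenerates, isolating $\opM^{12}-\opM^{21}$ and showing it passes through $\opM(v)$, then promoting this to commutation with $\opT_{\pm}$ through $\opM=\opT_{+}\opT_{-}^{T}$. In both the determinant and the antisymmetric-part arguments the delicate part is identical: verifying that the twisted $R^{\eta}$-normalizations and the $\chi$-ratios cancel and that every product is evaluated at an admissible point before passing to the quotient.
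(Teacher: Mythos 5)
First, a point of reference: the paper contains no proof of this proposition at all --- it is quoted as a result established in \cite{KS2} --- so there is no in-paper argument to compare yours against. Judged on its own terms, your plan for the same-sign subalgebras $\Yang_{\pm}$ is the standard and correct one: $R(w)=w\mathbf{1}-\ri\hbar P$ does degenerate to $2\ri\hbar A$ at $w=\ri\hbar$, the rank-one argument correctly yields $A\,\overset{1}{\opT}_{\pm}(u+\ri\hbar)\overset{2}{\opT}_{\pm}(u)=\overset{2}{\opT}_{\pm}(u)\overset{1}{\opT}_{\pm}(u+\ri\hbar)A=\qdet\opT_{\pm}(u)\,A$ with coefficient matching (\ref{qdetDefB}), and the two-fold fusion through $R_{01}R_{02}$ gives commutation with every generator of the same species. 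That half is textbook Yangian material.

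The gap is that the two claims which are actually specific to this model are announced rather than proved. (i) Commuting $\qdet\opT_{+}(u)$ past $\opT_{-}^{kl}(w)$ (and vice versa) means pushing $\overset{0}{\opT}_{-}(w)$ through $A_{12}\overset{1}{\opT}_{+}(u+\ri\hbar)\overset{2}{\opT}_{+}(u)$ with (\ref{Rpmint}) applied twice; this produces $R_{01}^{-1}R_{02}^{-1}$ on one side, $R^{\eta}_{02}R^{\eta}_{01}$ on the other, and the scalar $\chi(w-u)\chi(w-u-\ri\hbar)$. That the two fused scalars multiplied by this $\chi$-product equal $1$ is precisely the content of the proposition, and it holds only because of the particular shifts in $R(u-v-\ri\hbar)$ versus $R^{\eta}(u-v+\ri\hbar)$ --- the ``critical level'' condition the paper alludes to after (\ref{Tcomat}). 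Writing ``track carefully that the scalar prefactors cancel'' asserts the result instead of deriving it. (ii) For $\opM^{12}(u)-\opM^{21}(u)=\epsilon_{ab}\opT_{+}^{ac}(u)\opT_{-}^{bc}(u)$ the two spectral arguments coincide, so the antisymmetrizer is \emph{not} sitting at the degeneration point $w=\ri\hbar$ of $R$, and the mechanism you used for $\qdet$ does not transfer verbatim; one must work with the coincident-point form of (\ref{Rpmint}) in the quotient where (\ref{idealX}),(\ref{idealY}) hold, and this part of your argument is entirely schematic. Neither deferred step is likely to fail --- the statement is proved in \cite{KS2} --- but as written your proposal is a complete proof of the easy half and a promissory note for the half that carries the model-specific content.
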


The RTT relation (\ref{RTTint}) defines two subalgebras: $\Yang_{+}$ generated  by the operators $\opT_{+}^{ij}(u)$, and $\Yang_{-}$ generated by the operators $\opT_{-}^{ij}(u)$.
The equation (\ref{RTTint}) coincides with the defining relation of the Yangian $Y(\mathfrak{sl}(2))$ up to a rescaling of the spectral parameter: 
\be
\label{RTTtextbook}
	R^{\mathfrak{gl}_2}(u-v)\overset{1}{\opT}(u)\overset{2}{\opT}(v)=\overset{2}{\opT}(v)\overset{1}{\opT}(u)R^{\mathfrak{gl}_2}(u-v)\,,
\ee
\be
	R(u)= u  R^{\mathfrak{gl}_2}(-u\ri \hbar^{-1})\,.
\ee
According to the textbook definition \cite{M1}, the Yangian relation (\ref{RTTtextbook}) should be interpreted as an equality of two bivariate formal power series
in inverse powers of $u$ and $v$. The symbol $\opT(u)$ should be understood as a formal generating function of the Yangian generators, which are given by the coefficients of the expansion of $\opT(u)$ in $u^{-1}$:
\be
\label{asympT}
	\opT^{ij}(u)=\delta_{ij}+\sum_{k=1}^{\infty} t^{(k)}_{ij} u^{-k}\,.
\ee
In particular representations of Yangian the evaluations  $\opT^{ij}(u)$ are allowed to be ill-defined (i.e. not be a part of the algebra)  for  some (or even all) values of $u$, but the generators $t_{ij}^{(k)}$ are always assumed to be well-defined (i.e. must have finite matrix elements).

The representation of the algebra $\mathcal{D}$ on the Hilbert space of the quantum \ER\ is unknown.
However, one can expect that in a hypothetical representation analytical and other properties of the quantum operators $\opT_{\pm}(u)$
reproduce respective properties of their classical analogues (discussed in  the previous section) in the limit $\hbar \rightarrow 0$.
It is therefore natural to look for representations satisfying the following conditions.

\begin{enumerate}
	\item \label{condEval}
		All matrix elements of the operators $\opT_{\pm}^{ij}(u)$ in this representation are smooth complex-valued functions on the real line.
	\item The Fourier transforms of the operators $\opT_{\pm}^{ij}(u)$
\be
	\hat{\opT}_{\pm}(k)=\int\limits_{-\infty}^{\infty} e^{\ri k u} \opT_{\pm}^{ij}(u) \rd u
\ee
are convergent for real $k\neq 0$. 
	\item The operators $\opT_{+}^{ij}(u)$ is analytic in the lower half-plane, and the operators $\opT_{-}^{ij}(u)$ must be analytic in the upper half-plane.
	\item On the imaginary axis the following asymptotic formulae hold true:
\bea
	\opT_{+}^{ij}(-\ri s)&=&\delta_{ij}+\sum_{n=1}^{\infty} (-1)^n t^{(n)}_{ij} \hbar^{n} s^{-n}\,\quad s\in \mathbb{R}\,, s\rightarrow +\infty\\
	\opT_{-}^{ij}(\ri s)&=& \delta_{ij}+\sum_{n=1}^{\infty}  t^{(-n)}_{ij} \hbar^{n} s^{-n}\,\quad s\in \mathbb{R}\,, s\rightarrow +\infty
\eea
Note that this requirement does not imply the existence of an open neighbourhood where both expansions are valid simultaneously. 
\item 
	The Fourier modes $\hat{\opT}_{\pm}^{ij}(k)$ have the following expansion
\bea
	\hat{\opT}_{+}^{ij}(k)&=&\delta_{ij}\delta(k)+\theta(k)\sum_{n=0}^{\infty}t^{(n+1)}_{ij}\hbar^{n+1}\frac{k^n}{n!}\\
	\hat{\opT}_{-}^{ij}(k)&=&\delta_{ij}\delta(k)+\theta(-k)\sum_{n=0}^{\infty}(-1)^{n} t^{(-n-1)}_{ij}\hbar^{n+1}\frac{k^n}{n!}
\eea
\item The operators $\opT_{+}^{ij}(u)$ and $\opT_{-}^{ij}(u)$ are analytic in the strip $-\frac{\hbar}{2}<\Im(u)<\frac{\hbar}{2}$, and  all normal-ordered products (i.e. the products such that all the instances of $\opT_{-}$ are to the right of all $\opT_{+}$) are regular in all arguments as long as all the arguments lie within this strip. However, the products with a different ordering, p.e. $\opT_{-}^{ij}(u)\opT_{+}^{kl}(u\pm \ri\hbar)$, might be singular.
\item
	\label{cent_cond}
	In order to achieve the consistency with the classical
expressions  the central elements are fixed as follows:
\be
	\qdet \opT_{+}(u)=\qdet \opT_{-}(u)=1\,,
\ee
\be
		\opM^{12}(u)-\opM^{21}(u)=0\,.
\ee
\end{enumerate}
We shall refer to  representations satisfying the conditions \ref{condEval} - \ref{cent_cond} as ``{\it physically acceptable}'' representations.
Let us summarize the desired properties of the hypothetical Hilbert space representation:
\begin{itemize}
	\item The Fourier transforms $\hat{\opT}_{\pm}(k)$ are well-defined for $k\neq 0$
	\item The operator $\opT_{+}(u)$ is analytic for $\Im(u)>-\hbar$ (negative frequencies vanish)
	\item The operator $\opT_{-}(u)$ is analytic for $\Im(u)<\hbar$ (positive frequencies vanish)
	\item $\text{qdet}\opT_{\pm}(u)=1$
	\item $\opM^{12}(u)=\opM^{21}(u)$
	\item All normal-ordered products of the operators $\opT_{\pm}(u)$ (i.e. such that all instances of $\opT_{-}(u)$ are to the right of all instances of $\opT_{+}(u)$) are regular functions in all arguments in the strip.
\end{itemize}
At this point we can't prove the existence of such a representation.
We proceed on the assumption that such a representation exists, leaving the construction for the future.

The following proposition shows that a ``physically acceptable'' representation can't be finite-dimensional.
\begin{proposition}
\label{FinRepProp}
There is no finite-dimensional representation satisfying all the constraints above.
\end{proposition}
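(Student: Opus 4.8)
The plan is to turn the defining relation (\ref{idealX}) of the algebra $\mathcal{D}$ against the positivity of the Hilbert-space inner product, using the fact that in a finite-dimensional representation every $\opT_{\pm}^{ij}(u)$ is a genuine bounded operator. First I would record the two ingredients that hold in any representation of $\mathcal{D}$ on a space with positive-definite scalar product: the ideal relation $\sum_{a}\opT^{ai}_{-}(u)\opT^{aj}_{+}(u)=0$ coming from (\ref{idealX}), and the reality relation $\bigl(\opT^{ij}_{+}(u)\bigr)^{\dag}=\opT^{ij}_{-}(u)$ on the real axis, which is an honest adjoint identity precisely because physical acceptability presupposes a positive-definite metric on $\mathcal{H}$.

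Next, specialize (\ref{idealX}) to $i=j$ and use the reality relation to rewrite it as $\sum_{a}\bigl(\opT^{ai}_{+}(u)\bigr)^{\dag}\opT^{ai}_{+}(u)=0$ for every real $u$. Pairing with an arbitrary state $\psi\in\mathcal{H}$ gives $\sum_{a}\lVert\opT^{ai}_{+}(u)\psi\rVert^{2}=0$, so that $\opT^{ai}_{+}(u)=0$ for all $a,i$ and all real $u$. This already contradicts the normalization $\opT^{ij}_{+}(u)\to\delta_{ij}$ at large $u$ (in particular $\opT^{11}_{+}\to1$); the analyticity in a half-plane only reinforces the conclusion by propagating the vanishing off the real axis. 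This would complete the argument, since the same reasoning applies to any finite-dimensional representation satisfying the listed constraints.

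The single place where finite-dimensionality does essential work — and the step I expect to be the main obstacle — is the claim that the coincident-point product in (\ref{idealX}) really is the naive operator product $\bigl(\opT^{ai}_{+}(u)\bigr)^{\dag}\opT^{ai}_{+}(u)$ to which positivity applies. In a general representation the relation (\ref{idealX}) is reached only as the limit $v\to u$ of the well-defined product $\opT^{ai}_{-}(u)\opT^{aj}_{+}(v)$, and one of the conditions in the list explicitly warns that such reorderings may be singular; an infinite-dimensional representation built from unbounded, field-like operators can absorb this singularity and so evade the positivity bound. In finite dimensions there is no such escape: condition \ref{condEval} makes the matrix elements smooth, every operator is bounded, and the limit $v\to u$ is simply the continuous matrix product, so the positivity argument bites. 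I would therefore spend the bulk of the write-up establishing this boundedness and continuity and confirming that the Hermiticity relation is a true adjoint identity for the positive-definite inner product, after which the contradiction above is immediate.
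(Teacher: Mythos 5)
Your argument is correct as far as it goes, but it is a genuinely different proof from the one in the paper, and it quietly adds a hypothesis. The paper's own proof never touches Hermiticity or positivity: it multiplies the mixed exchange relation (\ref{Rpmint}) by $(u-v)^2+\hbar^2$ and evaluates at $u-v=\pm\ri\hbar$, where the scalar factor $\chi$ has its poles; since in a finite-dimensional representation with smooth matrix elements the product $\opT_{-}(u)\opT_{+}(v)$ is regular everywhere, the left-hand side vanishes at those points and one obtains $\opT_{+}^{ij}(u)\opT_{-}^{kl}(u\pm\ri\hbar)=0$. Expanding $\qdet\opT_{+}(u)\,\qdet\opT_{-}(u)$ via (\ref{qdetDefB}) exhibits it as a sum of terms each containing such a factor, so the product vanishes, contradicting $\qdet\opT_{\pm}(u)=1$ from condition \ref{cent_cond}. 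Your route instead exploits the \emph{zero} of $\chi$ at $u=v$ (through the coincident-point relation (\ref{idealX})) together with $\left(\opT_{+}^{ij}(u)\right)^{\dag}=\opT_{-}^{ij}(u)$ on the real axis and positivity to force $\opT_{+}^{ai}(u)=0$, which contradicts the normalization. Both degenerations of (\ref{Rpmint}) do the job, and your observation that finite-dimensionality is precisely what turns the coincident-point relation into an honest operator identity is the right one.

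The caveat is that positive-definiteness of the scalar product is not among the enumerated constraints \ref{condEval}--\ref{cent_cond} over which the proposition quantifies, and the paper has reason to keep it out: the Korotkin--Samtleben representation is cited exactly as one whose inner product is \emph{not} positive-definite, so in general $\dag$ need only be an adjoint with respect to a possibly indefinite nondegenerate form. Under an indefinite form the identity $\sum_a X_a^{\dag}X_a=0$ does not imply $X_a=0$, and your argument stops there. As written you therefore prove the proposition only for finite-dimensional representations on genuine Hilbert spaces, whereas the paper's purely algebraic argument rules out finite-dimensional representations regardless of the signature of the form. Either state the extra hypothesis explicitly or switch to the determinant-based contradiction, which needs nothing beyond conditions \ref{condEval} and \ref{cent_cond}.
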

\begin{proof}
For a finite-dimensional representation all products of regular operator-valued functions must be regular.
	Multiplying both sides of the identity (\ref{Rpmint}) by a factor $(u-v)^2+\hbar^2$ and evaluating both sides of the resulting equation at the points $u-v=\pm\ri\hbar$ one gets the following identity:
\be
\label{ihconstr}
	\opT_{+}^{ij}(u)\opT_{-}^{kl}(u\pm \ri \hbar)=0\,.
\ee
However, the product
\be
	\label{qdetProdNogo}
	\qdet \opT_{+}(u) \qdet \opT_{-}(u) =1
\ee
	is, per condition \ref{cent_cond},  non-zero.
	Let us  rewrite (\ref{qdetProdNogo}) in terms of the matrix entries of the matrices $\opT_{\pm}(u)$ using the definition (\ref{qdetDefB}). Then it is clear that this product is a linear combination of expressions proportional to quadratic expressions of the form (\ref{ihconstr}), and therefore must vanish in any finite-dimensional representation. Thus we have demonstrated that a finite-dimensional representation can't satisfy all the requirements simultaneously.
\end{proof}
\section{Quantum determinant of the deformed metric}
\label{section_qdet}
The quantum comatrix $\check{\mathbf{K}}(u)$ of a $2\times 2$ matrix $\mathbf{K}(u)$ with entries $\mathbf{K}^{ij}(u)$ being formal series with coefficients in Yangian is defined as
\be
	\check{\mathbf{K}}^{ij}(u)=\epsilon_{ia}\epsilon_{jb}\mathbf{K}^{ba}(u)\,.
\ee
Due to the RTT relation (\ref{RTTint})  the quantum transition matrices matrices $\opT_{\pm}(u)$ have the following property:
\be
\label{Tcomat}
	\opT_{\pm}(u) \check{\opT}_{\pm}(u+\ri\hbar)=\qdet \opT_{\pm}(u) \mathbf{1}\,,
\ee
where $\text{qdet}\opT_{\pm}(u)$ lies in the center of algebra due to the double being at the critical level and $\mathbf{1}$ is a $2\times 2$ identity matrix.
Since for the classical \ER\ we know that $\text{det}\clT_{\pm}(u)=1$, we are primarily interested in the representations such that
\be
\label{qdetCond}
	\qdet \opT_{\pm}(u)=1\,.
\ee

In order to verify that the transformation suggested in the work \cite{R1} as a quantum analogue of the Geroch group (see section \ref{section_geroch} for details) 
preserves the vacuum expectation values of the elements of the algebra $\mathcal{D}$ (and therefore preserves the physical content of the model)
one has to ensure that the condition (\ref{qdetCond}) is preserved by the transformation. Note that the authors of  \cite{R1} defined the quantum Geroch transformation only for  
the quantum monodromy matrix $\opM(u)$. Therefore, we must relate (\ref{qdetCond}) to some constraint written in terms of the quantum monodromy matrix.

Classically, the monodromy matrix $\clM(u)$ must have unit determinant. We suggest a quantum analogue of this constraint.
Let us consider the following expression:
\be
	\qM(u,s)=\opM(u)\check{\opM}(u+\ri\hbar+\ri s)\,.
\ee
In any representations satisfying requirements listed in the section \ref{section_rep} the operator $\qM(u,s)$ must have a pole at $s=0$ as long as both $u$ and $u+\ri\hbar$ are  within the intersection of the domains of analyticity of the quantum transition matrices $\opT_{+}(u)$ and $\opT_{-}(u)$.
The appearance of the pole of $\qM(u,s)$ at $s=0$, which does not appear in the classical case, can be informally understood as a manifestation of an infinite multiplicative renormalization factor.

We define the {\it regularized quantum determinant} of the quantum monodromy matrix as the residue of the top left component of $\mathfrak{M}$:
\be
	\qdet \opM(u) :=\frac{2}{3}{\rm Res}_{s=0} \qM^{11} (u,s)\,.
\ee
Then the following proposition holds.
\begin{proposition}
\label{prop_factor}
	Let both quantum transition matrices $\opT_{\pm}(u)$ be regular functions of the spectral parameter within the strip $-\textstyle\frac{1}{2}\hbar-\epsilon<\Im(u)<\textstyle\frac{1}{2}+\epsilon$ for some real positive $\epsilon$.
	Let all normal-ordered monomials composed of several instances of $\opT_{+}$ and $\opT_{-}$, i.e. expressions of the form
	\be
		\opT_{+}^{ij}(u_1)\opT_{+}^{ij}(u_2)\dots \opT_{+}^{ij}(u_k)
		\opT_{-}^{ij}(v_1)\opT_{-}^{ij}(v_2)\dots \opT_{+}^{ij}(v_m)\,
	\ee
	be regular as functions of all spectral parameters $u_i$, $v_i$ involved provided 
	all of these spectral parameters
	belong to the joint strip of analyticity. Then exchange relations imply the following identities:
\be
\label{MqdetRes}
	\qdet \opM(u)=\qdet \opT_{+}(u)\qdet \opT_{-}(u)\,,
\ee
\be
\label{MqdetMatrix}
	\frac{2}{3}{\rm Res}_{s=0} \qM^{ij} (u,s)=\qdet \opM(u)\delta_{ij}\,.
\ee
\end{proposition}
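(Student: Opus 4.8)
The plan is to trace the single source of the pole at $s=0$ — the reordering of one $\opT_-\opT_+$ pair — and then let the quantum comatrix identity (\ref{Tcomat}) together with the antisymmetry of the $\epsilon$-tensors assemble the two quantum determinants. First I would insert the factorization $\opM(u)=\opT_+(u)\opT_-^T(u)$ into $\qM^{ij}(u,s)=\opM^{ia}(u)\check{\opM}^{aj}(w)$, with $w:=u+\ri\hbar+\ri s$. Writing $\opM^{ia}(u)=\opT_+^{ik}(u)\opT_-^{ak}(u)$ and $\opM^{cb}(w)=\opT_+^{cl}(w)\opT_-^{bl}(w)$ and unfolding the comatrix, one obtains $\qM^{ij}(u,s)$ as a product of four transition-matrix factors contracted with $\epsilon_{jc}\epsilon_{ab}$, in which the only block standing in anti-normal order is the central pair $\opT_-^{ak}(u)\opT_+^{cl}(w)$.

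Second, I would normal-order that pair with the mixed exchange relation (\ref{Rpmint}), solved for $\opT_-(u)\opT_+(v)$:
\be
  \opT_-(u)\opT_+(v)=R^{-1}(u-v-\ri\hbar)\,\opT_+(v)\opT_-(u)\,R^{\eta}(u-v+\ri\hbar)\,\chi(u-v)\,.
\ee
With $v=w$ one has $u-v=-\ri(\hbar+s)$, so the reordering produces the scalar $\chi(-\ri(\hbar+s))=\frac{(\hbar+s)(3\hbar+s)}{s(2\hbar+s)}$, carrying a simple pole at $s=0$ with residue $\tfrac{3}{2}\hbar$, together with the auxiliary-space factors $R^{-1}(u-w-\ri\hbar)$ acting on the row indices and $R^{\eta}(u-w+\ri\hbar)$ acting on the column indices. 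The resulting expression is a fully normal-ordered monomial $\opT_+(u)\opT_+(w)\opT_-(u)\opT_-(w)$ — of exactly the type assumed regular throughout the strip — dressed by the two $\epsilon$-tensors, the two $R$-factors and the scalar $\chi$.

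Third comes the residue. At $s=0$ the matrix $R(u-w-\ri\hbar)=R(-2\ri\hbar)=-\ri\hbar(2\,\mathbf{1}+P)$ is invertible, its eigenvalues on the symmetric and antisymmetric subspaces being $3$ and $1$, while $R^{\eta}$ is polynomial; hence the only singularity at $s=0$ is the pole of $\chi$. This is exactly where the regularity hypothesis enters: it guarantees that the normal-ordered operator part is finite at $s=0$, so that $\mathrm{Res}_{s=0}$ acts only on $\chi$ and may be evaluated by setting $w=u+\ri\hbar$ everywhere else.

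Finally, at $w=u+\ri\hbar$ the two $\opT_+$ arguments (and likewise the two $\opT_-$ arguments) differ by precisely $\ri\hbar$, matching the shift in (\ref{Tcomat}) and in the definition (\ref{qdetDefB}). Contracting the constant matrices $R^{-1}(-2\ri\hbar)$ and $R^{\eta}(0)$ against the $\epsilon$-tensors replaces the permutation operator $P$ by the scalar $-1$ on every antisymmetrized pair and, through the antisymmetric projector $\tfrac{1}{2}(\mathbf{1}-P)$ hidden in these factors, supplies the additional $\epsilon$-contractions needed to complete the comatrix pattern. The two $\opT_+$ factors then collapse to $\opT_+(u)\check{\opT}_+(u+\ri\hbar)=\qdet\opT_+(u)\,\mathbf{1}$ and the two $\opT_-$ factors to $\qdet\opT_-(u)\,\mathbf{1}$ by (\ref{Tcomat}); the leftover product of two identity matrices leaves behind $\delta_{ij}$, which yields (\ref{MqdetMatrix}), and the $i=j$ component gives the factorization (\ref{MqdetRes}). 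The normalization $\tfrac{2}{3}$ is chosen precisely so that it cancels the residue $\tfrac{3}{2}\hbar$ of $\chi$ against the remaining $\hbar$- and $\ri$-factors coming from $R^{-1}(-2\ri\hbar)$, $R^{\eta}(0)$ and the projector weights. I expect this last step — keeping track of the index contractions and of these numerical constants so that the spurious symmetric-projector contributions cancel and the antisymmetric ones combine to unity — to be the main obstacle; everything before it is a mechanical use of (\ref{Rpmint}) and the regularity hypothesis.
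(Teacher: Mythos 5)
Your proposal follows essentially the same route as the paper's Appendix~\ref{appA}: insert $\opM=\opT_{+}\opT_{-}^{T}$, normal-order the single anti-ordered pair via (\ref{Rpmint}) at separation $u-w=-\ri(\hbar+s)$, isolate the simple pole coming from the scalar/R-matrix factors (residue $\tfrac{3}{2}\hbar$) using the assumed regularity of normal-ordered monomials, and reassemble the $\epsilon$-contractions into the product of the two quantum determinants. The index bookkeeping you flag as the main obstacle is carried out in the paper via the auxiliary identity (\ref{epsid}) together with $\qdet \opT_{\pm}(u)\epsilon_{ij}=\epsilon_{ab}\opT^{ia}_{\pm}(u)\opT^{jb}_{\pm}(u+\ri\hbar)$, and it works out exactly as you predict.
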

The conditions of the proposition are fulfilled if one adopts the set of assumptions outlined in the section \ref{section_rep}. 
The detailed derivation of the formulae (\ref{MqdetRes}),(\ref{MqdetMatrix}) can be found in the appendix.

The factorization formula (\ref{MqdetRes}) implies that the regularized quantum determinant of the monodromy matrix belongs to the center of the algebra $\mathcal{D}$.


\section{Quantum Geroch group}
\label{section_geroch}

The authors of \cite{R1} proposed the following quantization of the transformation (\ref{GerochMCl}): consider a quantum group defined by the relation
\be
	R(u-v)\overset{1}{\opS}(u)\overset{2}{\opS}(v)=\overset{2}{\opS}(v)\overset{1}{\opS}(u)R(u-v)\,,
\ee
and let it act on the quantum monodromy matrix $\opM(u)$ as
\be
\label{GerochMQ}
	\opM(u) \mapsto \tilde{\opM}(u)=\opS(u+\textstyle{\frac{\ri}{2}}\hbar)\opM(u)\opS(u-\textstyle{\frac{\ri}{2}}\hbar)\,,
\ee
assuming 
\be
	[\opM^{ij}(u),\opS^{kl}(v)]=0\,,\quad \forall u,v \in \mathbb{C}\,,\;i,j,k,l\in \{1,2\}\,.
\ee
Under the Hermitian conjugation the operator $\opS(u)$ transforms as follows:
\be
\label{sHConj}
	\left(\opS^{ij}(u)\right)^{\dag}=\opS^{ij}(\bar{u})\,.
\ee
The representation of Yangian in which the operator $\opS(u)$ is evaluated is not specified.
The work \cite{R1} contains the proof that the transformed quantum monodromy matrix $\tilde{\opM}(u)$ is symmetric and positive-definite for all real values of $u$.

The quantum Geroch group is not a symmetry group in the usual sense of the word.
More precisely, the transformation (\ref{GerochMQ}) is an isomorphism of the algebras generated by the operators $\opM(u)$ and $\tilde{\opM}(u)$ respectively, 
but if the operator $\opM(u)$ acts on the Hilbert space $\mathcal{H}$ and the operator $\opS(u)$ acts on some representation $\mathcal{G}$  of the Yangian  then the operator $\tilde{\opM}(u)$ acts on the tensor product $\mathcal{G}\otimes\mathcal{H}$.
The repeated application of the transformation (\ref{GerochMQ}) yields an operator acting on the tensor product $\mathcal{G}\otimes\mathcal{G}\otimes\mathcal{H}$, i.e. each copy of $\opS(u)$ acts on its own space.

The authors of \cite{R1} define the action of the quantum Geroch group on the  subalgebra of the algebra $\mathcal{D}$ generated by $M$.

We suggest the following  extension of the transformation (\ref{GerochMQ}) to the full algebra $\mathcal{D}$:
\bea
\label{GerochTQp}
	\opT_{+}(u)&\mapsto& \tilde{\opT}_{+}(u):= \opS(u\pm\textstyle{\frac{\ri}{2}}\hbar) \opT_{\pm}(u)\hat{\opU}(-u+\textstyle{\frac{\ri}{2}}\hbar)\,,\\[1mm]
\label{GerochTQm}
	\opT_{-}(u)&\mapsto& \tilde{\opT}_{-}(u):=\opS(u\pm\textstyle{\frac{\ri}{2}}\hbar) \opT_{\pm}(u)\opU^{T}(-u-\textstyle{\frac{\ri}{2}}\hbar)\,,
\eea
where the operator $\opU(u)$ generates another copy of Yangian. In other words, the operator $\opU(u)$ is an arbitrary operator satisfying the RTT relation:
\be
\label{UYang}
	R(u-v)\overset{1}{\opU}(u)\overset{2}{\opU}(v)=\overset{2}{\opU}(v)\overset{1}{\opU}(u) R(u-v)\,.
\ee
The symbols $\opS(u)$ and $\opU(u)$  are assumed to commute with the elements of the algebra $\mathcal{D}$ and between each other:
\bea
\label{GaugeCommST}
	[\opS^{ij}(u),\opT^{kl}_{\pm}(v)]&=&0\,,\\[1mm]
\label{GaugeCommSU}
	[\opS^{ij}(u),\opU^{kl}_{\pm}(v)]&=&0\,,\\[1mm]
\label{GaugeCommUT}
	[\opU^{ij}(u),\opT^{kl}_{\pm}(v)]&=&0\,.
\eea
The Hermitian conjugate of $\opU(u)$ is given by
\be
\label{uHConj}
	\left(\opU^{ij}(u)\right)^{\dag}=\epsilon_{ia}\epsilon_{jb}\opU^{ab}(\bar{u})\,.
\ee

The following properties can be established immediately.
\begin{proposition}
\label{propMtrf}
	The quantum monodromy matrix transforms according to the rule (\ref{GerochMQ}) under the transformation (\ref{GerochTQp}),(\ref{GerochTQm}).
\end{proposition}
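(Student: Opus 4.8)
The plan is to prove the claim by direct substitution of the transformation rules (\ref{GerochTQp}), (\ref{GerochTQm}) into the definition $\tilde{\opM}(u)=\tilde{\opT}_{+}(u)\tilde{\opT}_{-}^{T}(u)$, taking the $+$ branch for $\tilde{\opT}_{+}$ and the $-$ branch for $\tilde{\opT}_{-}$. First I would transpose the second factor, using $(\opU^{T})^{T}=\opU$, to obtain
\[
\tilde{\opM}(u)=\opS(u+\textstyle\frac{\ri}{2}\hbar)\,\opT_{+}(u)\,\check{\opU}(-u+\textstyle\frac{\ri}{2}\hbar)\,\opU(-u-\textstyle\frac{\ri}{2}\hbar)\,\opT_{-}^{T}(u)\,\opS^{T}(u-\textstyle\frac{\ri}{2}\hbar)\,,
\]
so that the two factors built out of $\opU$ become adjacent in the middle of the matrix product (here I identify the $\hat{\opU}$ of (\ref{GerochTQp}) with the quantum comatrix $\check{\opU}$).

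The key step is the collapse of this central pair. From the explicit form of the comatrix, $\check{\opU}(w)=\qdet\opU(w)\,\opU^{-1}(w-\ri\hbar)$, one reads off the reverse-order identity $\check{\opU}(w)\,\opU(w-\ri\hbar)=\qdet\opU(w)\,\mathbf{1}$, which is the $\opU$-analogue of (\ref{Tcomat}). Setting $w=-u+\frac{\ri}{2}\hbar$ gives $w-\ri\hbar=-u-\frac{\ri}{2}\hbar$, exactly the argument appearing in the second factor; the spectral shifts in (\ref{GerochTQp}), (\ref{GerochTQm}) are tuned precisely so that this matching occurs. By associativity of matrix multiplication the middle pair can therefore be evaluated on its own and reduces to the scalar $\qdet\opU(-u+\frac{\ri}{2}\hbar)\,\mathbf{1}$.

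Since $\qdet\opU$ is central in the Yangian generated by $\opU$ and commutes with every entry of $\opS$ and of the algebra $\mathcal{D}$ by (\ref{GaugeCommSU}), (\ref{GaugeCommUT}), I can pull this scalar out of the product without reordering any Yangian generators. The surviving matrix factor $\opT_{+}(u)\opT_{-}^{T}(u)$ is by definition $\opM(u)$, leaving
\[
\tilde{\opM}(u)=\qdet\opU(-u+\textstyle\frac{\ri}{2}\hbar)\,\opS(u+\textstyle\frac{\ri}{2}\hbar)\,\opM(u)\,\opS^{T}(u-\textstyle\frac{\ri}{2}\hbar)\,.
\]
Imposing $\qdet\opU(u)=1$, the natural analogue of (\ref{qdetCond}) reflecting that $\opU$ quantizes the $SL(2)$-valued gauge factor $\clU$, then reproduces the transformation rule (\ref{GerochMQ}), with the transposition on the right-hand factor inherited from the classical law (\ref{GerochMCl}).

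I expect the main obstacle to be bookkeeping rather than anything conceptual. One must check that the comatrix identity is invoked in the correct (reverse) order and with the correct shift, and confirm that no $R$-matrix exchange of the $\opU$-factors past the $\opT$ or $\opS$ factors is required; this is precisely what the commutativity relations (\ref{GaugeCommSU}), (\ref{GaugeCommUT}) guarantee, since they let the two $\opU$-valued factors meet without reordering the generators $\opS$ and $\opT$. A secondary point to state cleanly is that the central scalar $\qdet\opU$ must be fixed to the identity for exact agreement with (\ref{GerochMQ}); otherwise the transformation reproduces (\ref{GerochMQ}) only up to this central factor.
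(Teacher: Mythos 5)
Your proof is correct and is precisely the direct verification the paper leaves implicit (it states the property ``can be established immediately'' and offers no written proof): substitute (\ref{GerochTQp}), (\ref{GerochTQm}), transpose, let the commutativity relations (\ref{GaugeCommSU}), (\ref{GaugeCommUT}) bring $\check{\opU}(-u+\frac{\ri}{2}\hbar)$ and $\opU(-u-\frac{\ri}{2}\hbar)$ together, and collapse them via the reverse-order comatrix identity, whose shift is indeed tuned to match. You are also right to flag the two residual discrepancies rather than hide them: the central factor $\qdet\opU(-u+\frac{\ri}{2}\hbar)$, which the paper itself keeps explicit in Proposition \ref{propQdetFact}, so that (\ref{GerochMQ}) is reproduced exactly only under the normalization $\qdet\opU=1$ (the natural quantization of the $SL(2,\mathbb{R})$-valued classical gauge factor); and the transpose on the right-hand $\opS$ factor, which your computation produces and which agrees with the classical law (\ref{GerochMCl}) but is absent from (\ref{GerochMQ}) as printed.
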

\begin{proposition}
\label{propExRel}
	The transformed algebra $\tilde{\mathcal{D}}$ generated by the transformed transition matrices $\tilde{\opT}_{\pm}$ is isomorphic to the original algebra $\mathcal{D}$, i.e. the exchange relations are preserved:
\be
	R(u-v)\overset{1}{\tilde{\opT}}_{\pm}(u)\overset{2}{\tilde{\opT}}_{\pm}(v)=\overset{2}{\tilde{\opT}}_{\pm}(v)\overset{1}{\tilde{\opT}}_{\pm}(u)R(u-v)\,,
\ee
\be
	R(u-v-\ri\hbar)\overset{1}{\tilde{\opT}}_{-}(u)\overset{2}{\tilde{\opT}}_{+}(v)
	=\overset{2}{\tilde{\opT}}_{+}(v)\overset{1}{\tilde{\opT}}_{-}(u)R^{\eta}(u-v+\ri\hbar)\chi(u-v)\,.
\ee
\end{proposition}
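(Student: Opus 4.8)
The plan is to verify each of the two displayed exchange relations by substituting the dressing formulae (\ref{GerochTQp}),(\ref{GerochTQm}) and commuting the prefactor $R$-matrix from the far left to the far right, one ``block'' at a time; invertibility of the Yangian dressings then upgrades the preservation of relations to the asserted isomorphism. Write $\tilde{\opT}_{+}(u)=\opS(u+\tfrac{\ri}{2}\hbar)\,\opT_{+}(u)\,\check{\opU}(-u+\tfrac{\ri}{2}\hbar)$ and $\tilde{\opT}_{-}(u)=\opS(u-\tfrac{\ri}{2}\hbar)\,\opT_{-}(u)\,\opU^{T}(-u-\tfrac{\ri}{2}\hbar)$. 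Because $\opS$, $\opU$ and the $\opT_{\pm}$ act on the mutually commuting factors $\mathcal{G}$, $\mathcal{K}$, $\mathcal{H}$ by (\ref{GaugeCommST})--(\ref{GaugeCommUT}), and because objects carrying different auxiliary indices commute as matrices, any product $\overset{1}{\tilde{\opT}}(u)\overset{2}{\tilde{\opT}}(v)$ factorizes into an $\opS\opS$-block, a $\opT\opT$-block and a $\check{\opU}/\opU^{T}$-block, with the prefactor $R$ standing to their left. It therefore suffices to push $R$ successively through these three blocks and to compare with the identically reorganized right-hand side.

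For the $\opS\opS$-block I would use the Yangian relation for $\opS$, noting that the shifts are arranged precisely so that the arguments match the prefactor: in the like-like cases the two $\opS$ carry arguments differing by $u-v$, and in the mixed case by $u-v-\ri\hbar$, which is exactly the argument of the prefactor $R$. Hence $R$ traverses the $\opS\opS$-block, which is simultaneously permuted $1\leftrightarrow2$. It then meets the $\opT\opT$-block, where I would apply (\ref{RTTint}) in the like-like cases and the twisted relation (\ref{Rpmint}) in the mixed case; the latter is exactly the step that converts $R$ into $R^{\eta}(u-v+\ri\hbar)$ and emits the scalar $\chi(u-v)$, depositing $R^{\eta}$ immediately to the left of the $\check{\opU}/\opU^{T}$-block.

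The decisive step is to commute $R$ (resp. $R^{\eta}$) through the $\opU$-block, and here I would first record two auxiliary facts derived solely from (\ref{UYang}) and the identity $[R(w),\epsilon\otimes\epsilon]=0$ (the rational $R$-matrix commutes with $\epsilon\otimes\epsilon$ because it commutes with the permutation). Taking the inverse of (\ref{UYang}) and relabelling the two auxiliary spaces --- legitimate since $R$ is symmetric, $R_{21}=R_{12}$ --- shows that $\opU^{-1}$, hence the comatrix $\check{\opU}(w)\propto\opU^{-1}(w-\ri\hbar)$ with the central factor $\qdet\opU$, obeys the plain RTT relation with $R$ evaluated at \emph{minus} the argument difference. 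Conjugating by $\epsilon$ in one space and using $\epsilon^{T}=\epsilon^{-1}$ together with $[R,\epsilon\otimes\epsilon]=0$ transfers the same relation to $\opU^{T}$. These two facts settle the like-like cases, since the comatrix/transpose arguments $-u\pm\tfrac{\ri}{2}\hbar$ and $-v\pm\tfrac{\ri}{2}\hbar$ have difference $-(u-v)$, so the required $R$ is exactly the incoming $R(u-v)$, which passes through with the block permuted.

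The genuine obstacle is the mixed case, where $\opU^{T}$ in space $1$ and $\check{\opU}$ in space $2$ must exchange through $R^{\eta}$ rather than $R$. Writing $\opU^{T}=\epsilon^{T}\check{\opU}\,\epsilon$ and pulling the numerical $\epsilon$-factors out of the block, the needed identity reduces to the single crossing relation
\be
R^{\eta}(w)=(\overset{1}{\epsilon})^{-1}\,R(w)\,\overset{1}{\epsilon}\,.
\ee
This is where the construction hinges: conjugating the permutation in one auxiliary space turns it into $\mathbf{1}-Q$, with $Q$ the partial transpose of the permutation, so that $(\overset{1}{\epsilon})^{-1}R(w)\overset{1}{\epsilon}=(w-\ri\hbar)\mathbf{1}+\ri\hbar\,Q=w\bigl(\mathbf{1}-\ri\hbar\,r^{\eta}(w)\bigr)$, which is precisely $R^{\eta}(w)$. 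Granting this, the comatrix RTT relation at argument $u-v+\ri\hbar$ lets $R^{\eta}$ pass through the $\opU$-block, the residual $\epsilon$-factors recombine into $\check{\opU}$ and $\opU^{T}$, and the block emerges permuted with $R^{\eta}(u-v+\ri\hbar)$ on the right, matching the reorganized right-hand side. The reason the comatrix dresses $\opT_{+}$ while the transpose dresses $\opT_{-}$ is exactly to make this crossing identity reproduce the twist generated by (\ref{Rpmint}); I expect the verification of this crossing relation and its argument bookkeeping to be the main technical point, the $\opS$- and $\opT$-blocks being essentially immediate.
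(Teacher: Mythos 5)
Your proposal is correct and follows the direct verification that the paper leaves implicit (the paper merely asserts Proposition \ref{propExRel} among the properties that ``can be established immediately''): block-factorize using (\ref{GaugeCommST})--(\ref{GaugeCommUT}), pass $R$ through the $\opS$- and $\opT$-blocks via their own exchange relations, and handle the $\opU$-block with the reversed-argument RTT relation for $\opU^{-1}$ (hence $\check{\opU}$ and $\opU^{T}$) together with the crossing identity $R^{\eta}(w)=(\overset{1}{\epsilon})^{-1}R(w)\overset{1}{\epsilon}$, whose computation $(\overset{1}{\epsilon})^{-1}P\overset{1}{\epsilon}=\mathbf{1}-P^{T_1}$ you carry out correctly and which is indeed the reason the comatrix dresses $\opT_{+}$ while the transpose dresses $\opT_{-}$. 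The argument bookkeeping (differences $u-v$, $u-v-\ri\hbar$ and $-(u-v+\ri\hbar)$ in the three blocks of the mixed relation) all checks out, so the proof is complete.
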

\begin{proposition}
\label{propQdetFact}
The quantum determinant of the transformed quantum transition matrices factorizes as follows:
\be
	\qdet \tilde{\opT}_{\pm}(u)=\qdet \opS(u\pm \textstyle{\frac{\ri}{2}}\hbar) \qdet \opT_{\pm}(u) \qdet \opU (-u-\textstyle{\frac{\ri}{2}}\hbar)\,.
\ee
\end{proposition}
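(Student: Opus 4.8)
The plan is to prove the factorization by the fusion (co-multiplicativity) procedure for the quantum determinant, using that $\opS$, $\opT_{\pm}$ and $\opU$ generate three \emph{mutually commuting} copies of the Yangian, as guaranteed by the commutation relations (\ref{GaugeCommST})--(\ref{GaugeCommUT}). The rational $R$-matrix degenerates at the fusion point into (a multiple of) the projector $P^{-}=\tfrac12(\mathbf 1-P)$ onto the one-dimensional antisymmetric subspace of $\mathbb C^{2}\otimes\mathbb C^{2}$, where $P$ is the permutation operator. In this language the quantum determinant of any RTT matrix $\opX$ is the fused scalar, $P^{-}\,\overset{1}{\opX}(a)\,\overset{2}{\opX}(a+\ri\hbar)\,P^{-}=\qdet\opX(a)\,P^{-}$, which is just the fused form of the comatrix identity (\ref{Tcomat}). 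I would first record this fusion form, and then read off the proposition by fusing the transformation (\ref{GerochTQp})--(\ref{GerochTQm}) block by block.

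Concretely, treating the case of $\tilde{\opT}_{+}$ with the upper sign (the other sign and $\tilde{\opT}_{-}$ being analogous), I would write
\[
\qdet\tilde{\opT}_{+}(u)\,P^{-}=P^{-}\,\overset{1}{\tilde{\opT}}_{+}(u)\,\overset{2}{\tilde{\opT}}_{+}(u+\ri\hbar)\,P^{-},
\]
substitute the definition (\ref{GerochTQp}), and keep track of the shifts induced in the second auxiliary space: the factors become $\opS(u+\tfrac{\ri}{2}\hbar),\opS(u+\tfrac{3\ri}{2}\hbar)$, then $\opT_{+}(u),\opT_{+}(u+\ri\hbar)$, then $\check{\opU}(-u+\tfrac{\ri}{2}\hbar),\check{\opU}(-u-\tfrac{\ri}{2}\hbar)$. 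Using the commutativity relations I would regroup the product into three blocks $\mathcal S\,\mathcal T\,\mathcal U$, each containing only one matrix type across both spaces. The $\opS$-block and $\opT_{+}$-block fuse immediately to the central scalars $\qdet\opS(u+\tfrac{\ri}{2}\hbar)$ and $\qdet\opT_{+}(u)$. For the $\check{\opU}$-block I would invoke a comatrix-fusion lemma, $P^{-}\,\overset{1}{\check{\opU}}(w)\,\overset{2}{\check{\opU}}(w-\ri\hbar)\,P^{-}=\qdet\opU(w-\ri\hbar)\,P^{-}$ with $w=-u+\tfrac{\ri}{2}\hbar$, giving exactly $\qdet\opU(-u-\tfrac{\ri}{2}\hbar)$; for $\tilde{\opT}_{-}$ the third factor is $\opU^{T}$ and one needs instead a transpose-fusion lemma yielding $\qdet\opU$ with no intrinsic shift, which lands on the same argument $-u-\tfrac{\ri}{2}\hbar$. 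The three fused scalars are all central, so the chaining is clean: each block preserves the one-dimensional space $\mathrm{Im}\,P^{-}$ and acts there as its quantum determinant, so reading the product right-to-left collapses $P^{-}\,\mathcal S\,\mathcal T\,\mathcal U\,P^{-}$ to the product of the three central scalars times $P^{-}$, and matching the coefficient of $P^{-}$ gives the claim.

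The hard part will be the two fusion lemmas for $\check{\opU}$ and $\opU^{T}$, together with the bookkeeping of the half-integer shifts $u\pm\tfrac{\ri}{2}\hbar$. Although classically both the adjugate and the transpose preserve the determinant of a $2\times2$ matrix, their quantum counterparts carry \emph{different} spectral-parameter shifts: the comatrix relation $\check{\opU}(u)=\qdet\opU(u)\,\opU^{-1}(u-\ri\hbar)$ introduces an intrinsic $-\ri\hbar$ displacement, whereas transposition does not, and it is precisely this asymmetry that makes both blocks evaluate to $\qdet\opU(-u-\tfrac{\ri}{2}\hbar)$ rather than to two different arguments. I would establish each lemma by expanding the fused scalar into entries of $\opU$, reducing it to one of the standard equivalent forms of the quantum determinant via the RTT relation (\ref{UYang}) for $\opU$, and then reading off the argument. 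The remaining subtlety to verify is the chaining step itself, i.e.\ that each single-type block genuinely maps $\mathrm{Im}\,P^{-}$ into itself at the fused arguments; this is the standard fusion property of a single Yangian, applied separately to the three commuting copies, so it should go through once the shifts are aligned.
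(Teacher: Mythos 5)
Your strategy is the right one, and since the paper itself offers no proof of Proposition \ref{propQdetFact} (it is merely listed among properties that ``can be established immediately''), the fusion argument you sketch is the natural reconstruction. Two remarks on the mechanics. First, the chaining worry you raise at the end is resolved by using the \emph{one-sided} form of the fusion identity rather than sandwiching each block between two projectors: contracting the column indices of the rightmost block with $\epsilon$ yields a central scalar times $\epsilon$ in the column indices of the next block, so the product collapses from right to left and no insertion of $P^{-}$ between blocks is needed. Concretely, one only needs $\epsilon_{ab}\opX^{ma}(v)\opX^{nb}(v+\ri\hbar)=\qdet\opX(v)\,\epsilon_{mn}$ for the $\opS$ and $\opT_{\pm}$ blocks and the corresponding identity for the $\opU$ block.

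The one place where your proposal does not survive a direct check is the transpose-fusion lemma, and with it the claim that the $\check{\opU}$ block and the $\opU^{T}$ block land on the \emph{same} argument. Writing $\check{\opU}^{ij}(w)=\epsilon_{ia}\epsilon_{jb}\opU^{ba}(w)$ and using the degenerate RTT identity $\epsilon_{ab}\opU^{ak}(v+\ri\hbar)\opU^{bl}(v)=\qdet\opU(v)\,\epsilon_{kl}$, one finds
\be
\epsilon_{ab}\,\check{\opU}^{da}(w)\,\check{\opU}^{fb}(w-\ri\hbar)=\qdet\opU(w-\ri\hbar)\,\epsilon_{df}\,,\qquad
\epsilon_{ab}\,\bigl(\opU^{T}\bigr)^{da}(v)\,\bigl(\opU^{T}\bigr)^{fb}(v-\ri\hbar)=\qdet\opU(v-\ri\hbar)\,\epsilon_{df}\,.
\ee
For $\tilde{\opT}_{+}$ one has $w=-u+\frac{\ri}{2}\hbar$, giving $\qdet\opU(-u-\frac{\ri}{2}\hbar)$ as stated; but for $\tilde{\opT}_{-}$ one has $v=-u-\frac{\ri}{2}\hbar$, giving $\qdet\opU(-u-\frac{3\ri}{2}\hbar)$. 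The structural reason is convention-independent: the $\check{\opU}$ block involves $\opU$ at the pair $\{-u+\frac{\ri}{2}\hbar,\,-u-\frac{\ri}{2}\hbar\}$ while the $\opU^{T}$ block involves $\opU$ at $\{-u-\frac{\ri}{2}\hbar,\,-u-\frac{3\ri}{2}\hbar\}$, and in each case the fused scalar is $\qdet\opU$ based at the lower point of the pair; the adjugate introduces no shift beyond its own argument and the transpose does not remove one, so the two results necessarily differ by $\ri\hbar$. Your heuristic that the comatrix/transpose asymmetry conspires to produce a common argument is therefore backwards, and the proposition as printed appears to be off by $\ri\hbar$ in the lower-sign case. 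This is immaterial for the paper's purposes --- all three factors are central and the physically relevant normalization sets $\qdet\opU\equiv 1$ --- but a complete proof should either record the two different arguments or carry out the two lemmas explicitly rather than asserting their outcome.
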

\begin{proposition}
\label{propHConj}
	The transformed transition matrices $\tilde{\opT}_{+}(u)$ and $\tilde{\opT}_{-}(u)$ are related via Hermitian conjugation:
	\be
		\left(\tilde{\opT}_{+}^{ij}(u)\right)^{\dag}=\tilde{\opT}_{-}^{ij}(\bar{u})\,.
	\ee
\end{proposition}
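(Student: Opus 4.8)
The plan is to prove this identity by a direct component computation, using only the three conjugation rules already available — the rule $\left(\opT_{+}^{ij}(u)\right)^{\dag}=\opT_{-}^{ij}(\bar u)$ for the transition matrices, the rule (\ref{sHConj}) for $\opS$, and the rule (\ref{uHConj}) for $\opU$ — together with the commutation relations (\ref{GaugeCommST})--(\ref{GaugeCommUT}). First I would write the definition (\ref{GerochTQp}) in components,
\be
	\tilde{\opT}_{+}^{ij}(u)=\opS^{ia}(u+\textstyle{\frac{\ri}{2}}\hbar)\,\opT_{+}^{ab}(u)\,\check{\opU}^{bj}(-u+\textstyle{\frac{\ri}{2}}\hbar)\,,
\ee
with summation over $a,b$ understood. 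Since $\opS$, $\opT_{\pm}$ and $\opU$ act on the three mutually commuting spaces $\mathcal{G}$, $\mathcal{H}$ and $\mathcal{K}$, the Hermitian conjugate reverses the order of the three factors, but (\ref{GaugeCommST})--(\ref{GaugeCommUT}) allow me to restore any convenient order, giving
\be
	\left(\tilde{\opT}_{+}^{ij}(u)\right)^{\dag}=\left(\opS^{ia}(u+\textstyle{\frac{\ri}{2}}\hbar)\right)^{\dag}\left(\opT_{+}^{ab}(u)\right)^{\dag}\left(\check{\opU}^{bj}(-u+\textstyle{\frac{\ri}{2}}\hbar)\right)^{\dag}\,.
\ee

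Next I would apply the conjugation rules factor by factor. The $\opS$-factor becomes $\opS^{ia}(\bar u-\frac{\ri}{2}\hbar)$ by (\ref{sHConj}), and the $\opT$-factor becomes $\opT_{-}^{ab}(\bar u)$, both with the correct arguments because $\overline{u+\frac{\ri}{2}\hbar}=\bar u-\frac{\ri}{2}\hbar$. The crucial step is the $\opU$-factor: expanding the comatrix as $\check{\opU}^{bj}=\epsilon_{bc}\epsilon_{jd}\opU^{dc}$ and inserting (\ref{uHConj}) for $\left(\opU^{dc}\right)^{\dag}$, the two pairs of Levi-Civita symbols collapse through $\epsilon_{jd}\epsilon_{de}=-\delta_{je}$ and $\epsilon_{bc}\epsilon_{cf}=-\delta_{bf}$, so that
\be
	\left(\check{\opU}^{bj}(w)\right)^{\dag}=\opU^{jb}(\bar w)=\left(\opU^{T}\right)^{bj}(\bar w)\,.
\ee
In other words, Hermitian conjugation converts the quantum comatrix of $\opU$ back into the transpose of $\opU$, which is exactly the factor appearing in the definition (\ref{GerochTQm}) of $\tilde{\opT}_{-}$; this is precisely why the conjugation rule (\ref{uHConj}) was chosen in that form.

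Finally, collecting the three conjugated factors and setting $w=-u+\frac{\ri}{2}\hbar$ so that $\bar w=-\bar u-\frac{\ri}{2}\hbar$, I obtain
\be
	\left(\tilde{\opT}_{+}^{ij}(u)\right)^{\dag}=\opS^{ia}(\bar u-\textstyle{\frac{\ri}{2}}\hbar)\,\opT_{-}^{ab}(\bar u)\,\left(\opU^{T}\right)^{bj}(-\bar u-\textstyle{\frac{\ri}{2}}\hbar)\,,
\ee
which is exactly the component form of the definition (\ref{GerochTQm}) evaluated at $\bar u$, namely $\tilde{\opT}_{-}^{ij}(\bar u)$, completing the verification. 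The computation is essentially routine; the only point demanding genuine care — and the place where a misplaced sign or an incorrect spectral shift would break the argument — is the collapse of the double $\epsilon$-contraction in the $\opU$-factor, which must turn $\check{\opU}$ into $\opU^{T}$ with no spurious sign, together with the bookkeeping of the half-integer shifts $\pm\frac{\ri}{2}\hbar$, which have to conspire so that the argument of every factor matches the corresponding factor of $\tilde{\opT}_{-}(\bar u)$.
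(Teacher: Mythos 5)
Your computation is correct: the reordering of factors is justified by the mutual commutativity (\ref{GaugeCommST})--(\ref{GaugeCommUT}), the double $\epsilon$-contraction indeed turns $\check{\opU}$ into $\opU^{T}$ with no residual sign via $\epsilon_{jd}\epsilon_{de}=-\delta_{je}$, and the spectral shifts match the definition (\ref{GerochTQm}) evaluated at $\bar u$. The paper states this proposition without proof as an immediate consequence of the definitions, and your direct component verification is exactly the intended argument.
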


The commutation relations (\ref{GaugeCommST})-(\ref{GaugeCommUT}) imply that the operator $\opS(u)$ acts on some space $\mathcal{G}$, 
the operator $\opU(u)$ acts on another space $\mathcal{K}$, and the transformed transition matrices $\tilde{\opT}_{\pm}(u)$ act on the tensor product 
$\mathcal{G}\otimes\mathcal{H}\otimes\mathcal{K}$, where $\mathcal{H}$ is the Hilbert space of the untransformed model. The propositions \ref{propMtrf} - \ref{propQdetFact} hold for any choice of the representations $\mathcal{G}$ and $\mathcal{K}$ of the Yangian, and the proposition \ref{propHConj} is valid for any pair of representations satisfying identities (\ref{sHConj}), (\ref{uHConj}) respectively.

However, the resulting representation will not be ``physically acceptable'' for a generic choice of $\mathcal{G}$ and $\mathcal{K}$.
In order to get a ``physically acceptable'' representation, we must impose on the operator $\opU(u)$ some constraint
written in terms of the operators $\opS(u)$ and $\opT_{\pm}(u)$. The explicit formulation of this constraint is not known at the moment.

We can make the following observation, which helps to understand the difficulty with the formulation of this constraint.
\begin{proposition}
	\label{prop_nogo_hol}
	The transformed transition matrices $\tilde{\opT}_{\pm}(u)$ can't be analytical in the same half-planes as the original matrices.
\end{proposition}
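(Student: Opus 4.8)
The plan is to argue by contradiction, assuming that $\tilde{\opT}_{+}(u)$ is holomorphic in the upper half-plane (more precisely in the domain $\Im(u)>-\hbar$ in which the original $\opT_{+}$ is analytic in a physically acceptable representation) and that $\tilde{\opT}_{-}(u)$ is holomorphic in the lower half-plane $\Im(u)<\hbar$. The crucial structural input is that in the transformation (\ref{GerochTQp}), (\ref{GerochTQm}) the three factors $\opS$, $\opT_{\pm}$ and $\opU$ act on the mutually independent tensor factors $\mathcal{G}$, $\mathcal{H}$ and $\mathcal{K}$ and commute by (\ref{GaugeCommST})--(\ref{GaugeCommUT}). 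Evaluating a matrix element of $\tilde{\opT}_{\pm}(u)$ in a factorized basis $|g\rangle\otimes|h\rangle\otimes|\kappa\rangle$ turns it into a finite sum of products of matrix elements of the separate factors, so a pole of one factor in the variable $u$ cannot be cancelled by the others, and analyticity of the product in a region forces analyticity of each factor (as an operator-valued function of $u$) in that region. I would isolate this as a short lemma, since it is the engine of the argument.

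First I would run the bookkeeping for $\opS$. The factor $\opS(u+\tfrac{\ri}{2}\hbar)$ entering $\tilde{\opT}_{+}$ must then be analytic for $\Im(u)>-\hbar$, i.e. $\opS(w)$ analytic for $\Im(w)>-\tfrac{\hbar}{2}$; the factor $\opS(u-\tfrac{\ri}{2}\hbar)$ entering $\tilde{\opT}_{-}$ must be analytic for $\Im(u)<\hbar$, i.e. $\opS(w)$ analytic for $\Im(w)<\tfrac{\hbar}{2}$. The union of the two half-planes $\{\Im(w)>-\tfrac{\hbar}{2}\}\cup\{\Im(w)<\tfrac{\hbar}{2}\}$ is all of $\mathbb{C}$, so $\opS(w)$ would have to be entire. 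The same bookkeeping applies to $\opU$: since the comatrix $\check{\opU}(w)$ and the transpose $\opU^{T}(w)$ are mere rearrangements of the entries of $\opU(w)$ at the same argument, they are analytic exactly where $\opU$ is. The factor $\check{\opU}(-u+\tfrac{\ri}{2}\hbar)$ in $\tilde{\opT}_{+}$ forces $\opU(w)$ analytic for $\Im(w)<\tfrac{3\hbar}{2}$, while $\opU^{T}(-u-\tfrac{\ri}{2}\hbar)$ in $\tilde{\opT}_{-}$ forces $\opU(w)$ analytic for $\Im(w)>-\tfrac{3\hbar}{2}$; again the union is the whole plane, so $\opU(w)$ is entire as well. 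Note that it is the argument reflection $u\mapsto -u$ inside the $\opU$-factors that aligns the two half-plane constraints so they overlap rather than leave a gap.

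To close the argument I would invoke the Yangian asymptotics (\ref{asympT}): both $\opS(w)$ and $\opU(w)$ tend to the identity matrix as $w\to\infty$. An entire operator-valued function whose matrix elements possess a finite limit at infinity extends holomorphically to the Riemann sphere, so by Liouville's theorem every matrix element is constant and equal to its value at infinity; hence $\opS(w)\equiv\mathbf{1}$ and $\opU(w)\equiv\mathbf{1}$. But then (\ref{GerochTQp}), (\ref{GerochTQm}) reduce to the identity transformation $\tilde{\opT}_{\pm}=\opT_{\pm}$, which represents only the trivial element of the quantum Geroch group. Thus no nontrivial transformation of the form (\ref{GerochTQp}), (\ref{GerochTQm}) can leave $\tilde{\opT}_{\pm}$ analytic in the original half-planes, which is the assertion of the proposition.

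The step I expect to be the main obstacle is the no-cancellation lemma. Although the factors live on independent spaces, one must rule out accidental cancellations in the finite index sum $\sum_{k,m}\opS^{ik}\opT_{\pm}^{km}\check{\opU}^{mj}$: at a putative pole $w_{0}$ of $\opS$ one has to exhibit states and indices for which the remaining factors $\opT_{\pm}$ and $\opU$ are simultaneously non-vanishing and analytic, so that the pole actually survives in some matrix element of $\tilde{\opT}_{\pm}$. A second, more technical point is that the Liouville step requires the matrix elements of $\opS$ and $\opU$ to be genuine analytic functions with a true limit at infinity rather than merely formal power series in $w^{-1}$; this holds only under the convergence and regularity assumptions built into a physically acceptable representation, and I would state it explicitly as a hypothesis.
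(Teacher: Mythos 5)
Your argument hinges on what you call the no-cancellation lemma: that holomorphy of the finite index sum
$\sum_{a,b}\langle\phi|\opS^{ia}|\phi'\rangle\,\langle\psi|\opT_{\pm}^{ab}|\psi'\rangle\,\langle\chi|\opU^{bj}|\chi'\rangle$
forces holomorphy of each factor separately. That lemma is false, and the classical Geroch group is the counterexample: in (\ref{GerochTpClI}), (\ref{GerochTmClI}) the matrix $\clU\left(\clS,\clT_{\pm}\mid u\right)$ is \emph{defined} precisely so that the product $\clS(u)\clT_{+}(u)\clU^{-1}(u)$ continues holomorphically into the upper half-plane even though $\clS(u)$ alone does not; the singular parts cancel inside the matrix product. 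If your lemma held, the same Liouville argument applied to the classical transformation would force $\clS\equiv\mathbf{1}$ and render the classical Geroch group trivial, which it is not. Concretely, a pole of $\langle\phi|\opS(w)|\phi'\rangle$ at $w_{0}$ drops out of the matrix element whenever the residue matrix, multiplied by $\opT_{\pm}$ and by the $\opU$-factor evaluated at the corresponding point, vanishes --- for instance when that $\opU$-factor is degenerate there --- and this is exactly the mechanism the whole construction is designed to exploit. Your conclusion that $\opS\equiv\opU\equiv\mathbf{1}$ therefore proves far too much and cannot be reached along this route; the half-plane bookkeeping and the Liouville step, while internally consistent, rest on a premise that fails at the first factorized matrix element.

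The paper's proof starts from the opposite observation: for a \emph{fixed} pair of states $|\phi\rangle,|\phi'\rangle$ of $\mathcal{G}$ the cancellation can in principle be arranged --- the matrix elements of $\opU$ may be singular only where $\langle\phi|\opS(u)|\phi'\rangle$ is singular or degenerate, which is the quantum counterpart of solving for $\clU$ classically. The obstruction is that $\opU$ is a \emph{single} operator on $\mathcal{K}$, whereas the states $|\phi\rangle,|\phi'\rangle$ range over all of $\mathcal{G}$: one and the same $\opU$ would have to effect the cancellation for every pair simultaneously, which requires all matrix elements of $\opS(u)$ to share their singularities, and that is incompatible with the classical limit in which almost any meromorphic $SL(2,\mathbb{R})$-valued function with the right asymptotics occurs as a Geroch group element. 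To repair your proof you would have to replace the no-cancellation lemma by this uniformity-over-$\mathcal{G}$ argument, at which point you have essentially reconstructed the proof in the paper.
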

\begin{proof}
	Suppose that for some positive number $\varepsilon$ the operator $\tilde{\opT}_{+}(u)$ is holomorphic in the half-plane $\Im(u)>-\textstyle{\frac{1}{2}}\hbar-\varepsilon$, and the operator $\tilde{\opT}_{-}(u)$ is holomorphic in the half-plane $\Im(u)<\textstyle\frac{1}{2}+\varepsilon$.
	Then any matrix element of any 
	of these operators must be a holomorpic function in the corresponding half-plane. Let us consider a pair of states $|\Omega\rangle$ and $|\Omega'\rangle$
	\bea
	|\Omega\rangle&=&|\phi\rangle\otimes|\psi\rangle\otimes |\chi\rangle\,,\\[1mm]
	|\Omega'\rangle&=&|\phi'\rangle\otimes|\psi'\rangle\otimes |\chi'\rangle\,,
	\eea
	with
	\bea
	|\phi\rangle,\; |\phi'\rangle \in \mathcal{G}\,,\\[1mm]
	|\psi\rangle,\; |\psi'\rangle \in\mathcal{H}\,,\\[1mm]
	|\chi\rangle,\; |\chi'\rangle \in \mathcal{K}\,.
	\eea
	The matrix elements $\langle \Omega|\tilde{\opT}^{ij}_{\pm}(u)|\Omega'\rangle$ then read
	\bea
		\langle \Omega|\tilde{\opT}^{ij}_{-}(u)|\Omega'\rangle&=&
		\langle \phi |\opS^{ia}(u-\textstyle{\frac{\ri}{2}}\hbar)|\phi'\rangle
		\langle \psi |\opT^{ab}_{-}(u)|\psi'\rangle
		\langle \chi |\opU^{jb}(u-\textstyle{\frac{\ri}{2}}\hbar)|\chi'\rangle \,,\\[1mm]
		\langle \Omega|\tilde{\opT}^{ij}_{+}(u)|\Omega'\rangle&=&
		\langle \phi |\opS^{ia}(u+\textstyle{\frac{\ri}{2}}\hbar)|\phi'\rangle
		\langle \psi |\opT^{ab}_{+}(u)|\psi'\rangle \epsilon_{bc}\epsilon_{jd}
		\langle \chi |\opU^{cd}(u-\textstyle{\frac{\ri}{2}}\hbar)|\chi'\rangle\,.
	\eea
	The problem of finding the matrix elements $\langle\chi|\opU^{ij}(u)|\chi'\rangle$ 
	as functions of the spectral parameter
	is similar to the problem we encounter in the classical case when we solve for the matrix $\clU\left(\clS,\clT_{\pm}\mid u\right)$.
	The only difference is that the determinant of the matrix 
	$\langle\chi|\opU(u)|\chi'\rangle$ is not fixed. 
	Still, the requirement that the matrix elements $\langle \Omega|\tilde{\opT}^{ij}_{\pm}(u)|\Omega'\rangle$ should be holomorphic in their respective half-plane significantly constraints possible choices for the matrix elements of the operator $\opU(u)$. In particular, they can have singularities only at the points where
	the matrix element $\langle\chi|\opS(u)|\chi'\rangle$ is singular or degenerate.
	
	Since the states $|\phi\rangle$, $|\phi'\rangle$, $|\psi\rangle$, $|\psi'\rangle$, $|\chi\rangle$ and $|\chi'\rangle$ can be chosen independently, any
	matrix element of $\opU(u)$ should be a common solution to the problems with  all pairs of states $|\phi\rangle$, $|\phi'\rangle$ from $\mathcal{G}$.
	It is possible only if all matrix elements of the operator $\opS(u)$ have their singularities at the same points.
	But that is incompatible with the classical limit, as almost any meromorphic function with unit determinant and an appropriate asymptotic behaviour 
	represents an element of the classical Geroch group.
\end{proof}
The proposition \ref{prop_nogo_hol} means that the operator $\opU(u)$ must be fixed by some weaker condition.
This hypothetical condition should still give rise to a ``physically acceptable'' representation, albeit in an indirect way.
We suggest the following scenario, which is not forbidden by the proposition \ref{prop_nogo_hol}: the representation of $\mathcal{D}$ on the space $\,\mathcal{G}\otimes\mathcal{H}\otimes\mathcal{K}\,$ decomposes into a direct sum of multiple representations.
The operator $\opU(u)$ should be chosen in such a way that at least one of these subrepresentations is ``physically acceptable''.
In this scenario the quantum Geroch transformation (\ref{GerochTQp}),(\ref{GerochTQm}) is not an automorphism of the ``physically acceptable'' representation, but it can induce such automorphisms. The latter are constructed by identifying the vacuum state from  the space $\mathcal{H}$ with some vector from one of the ``physically acceptable'' subrepresentations.

The extension of the quantum Geroch group described above may be used to prove that the transformed monodromy matrix satisfies the identity (\ref{MqdetRes}). Indeed, by construction the transformed monodromy matrix 
admits factorization
\be
	\tilde{\opM}=\tilde{\opT}_{+}(u)\tilde{\opT}_{-}^{T}(u)
\ee
such that the transformed transition matrices $\tilde{\opT}_{\pm}$ give a representation of the deformed quantum double (\ref{RTTint}),(\ref{Rpmint}) and 
\be
	\text{qdet}\tilde{\opT}_{\pm}(u)=1\,.
\ee
The derivation of the factorization formula (\ref{MqdetRes}) remains valid for the transformed transition matrices provided that the assumptions about analyticity of the transition matrices and their normal-ordered products hold for the transformed algebra.
These assumptions would be fulfilled for the restriction of the transformed transition matrices onto one of the ``physically acceptable'' subrepresentations in the hypothetical scenario outlined above.
If that is the case, the regularized quantum determinant of the quantum monodromy matrix is an invariant of the  quantum Geroch group. 
\section{Conclusion}

In this paper we constructed a regularization of the quantum determinant of the quantum monodromy matrix  which deforms the determinant of the classical monodromy matrix.
We expressed it as a product of quantum determinants of the transition matrices.
We gave arguments supporting the conjecture that the regularized quantum determinant of the quantum  monodromy matrix is 
preserved by the action of the quantum Geroch group. 
We also suggested the extension of the action of the quantum dressing transformation onto the full algebra of observables which preserves the quantum determinants of the quantum transition matrices.
To prove that the action of the quantum Geroch group is an automorphism of the representation it remains to verify the analyticity assumptions for both original 
and transformed transition matrices.

The quantum integrable structure of \ER\ merits further investigation.
To begin with, the complete description of the Hilbert space of the system is still unknown, as well as the explicit action of the quantum Geroch group on the space of states.
Furthermore, there is no known  infinite family of commuting integrals of motion for the Einstein-Rosen model, even though the presence of such families is a common feature of integrable quantum models.
Finally, an explicit expression of Hamiltonian  in terms of quantum transition matrices is also missing.

\section*{Acknowledgements}
I am grateful to Dmitry Korotkin for introducing me to this problem, and to Michael Reisenberger for his interest in my work and insightful discussions.
\appendix
\section{Proof of the factorization formula for residue of quantum determinant of $\opM(u)$}
\label{appA}
In components, the matrix $\qM(u,s)$ can be written as
\be
	\qM^{ij}(u,s)=\epsilon_{j\nu}\epsilon_{\alpha\beta}\opM^{i\alpha}(u)\opM^{\nu\beta}(\hat{u}+\ri s)\,,
\ee
where we use the abbreviated notation
\be
	\hat{u}=u+\ri\hbar\,.
\ee
Using the exchange relation (\ref{Rpmint}) we rewrite the operator $\qM(u,s)$ in its normal-ordered form:
\bea
	\qM^{ij}(u,s)&=&\epsilon_{j\nu}\epsilon_{\alpha\beta}\opT_{+}^{ia}(u)\opT_{-}^{\alpha a}(u)\opT_{+}^{\nu b}(\hat{u}+\ri s)\opT_{-}^{\beta b}(\hat{u}+\ri s)\nonumber\\
	&=&\epsilon_{j\nu}\epsilon_{\alpha\beta}\opT_{+}^{ia}(u)R^{-1}_{\alpha k \nu l}(-2\ri\hbar-\ri s)\opT_{+}^{l l'}(\hat{u}+\ri s)\\
	&\times& \opT_{-}^{kk'}(u) R^{\eta}_{k' a l' b}(-\ri s) \opT_{-}^{\beta b}(\hat{u}+\ri s) \chi(-\ri\hbar-\ri s)\,.\nonumber
\eea
Substituting the explicit expressions for $\chi$, $R$ and $R^{\eta}$ we obtain
\bea
	\qM^{ij}(u,s)&=&\frac{\epsilon_{j\nu}\epsilon_{\alpha\beta}}{s(s+2\hbar)}
	\opT_{+}^{ia}(u)\left((2\hbar+s)\delta_{\alpha k}\delta_{\nu l}-\hbar\delta_{\alpha l}\delta_{\nu k}\right)\nonumber\\
	&\times& \opT_{+}^{ll'}(\hat{u}+\ri s) \opT_{-}^{kk'}(u)\left((\hbar+s)\delta_{k' a}\delta_{l' b}-\hbar\delta_{k'l'}\delta_{ab}\right)
	\opT_{-}^{\beta b}(\hat{u}+\ri s)\,.
\eea
Assuming the representation satisfies the conditions outlined in the section \ref{section_rep} we compute the limit of operator $s\qM(u,s)$ as $s$ tends to zero:
\bea
\label{qMlimitTens}
	\lim\limits_{s\rightarrow 0} s\qM^{ij}(u,s)&=&\frac{\hbar\epsilon_{j\nu}\epsilon_{\alpha\beta}}{2}
	\opT_{+}^{ia}(u)(2\delta_{\alpha k}\delta_{\nu l}-\delta_{\alpha l}\delta_{\nu k})
	\opT_{+}^{lb}(\hat{u})\nonumber\\
	 &\times & \opT_{-}^{kk'}(u)(\delta_{k' a}\delta_{ b l'}-\delta_{k'b}\delta_{al'})
	\opT_{-}^{\beta l'}(\hat{u})\,.
\eea
In the above expression we relabeled the contracted indices to make use of the following identity for $2\times 2 $ matrices:
\bea
\label{epsid}
	 \epsilon_{ac}\epsilon_{bd}\opT_{+}^{ia}(u)\opT_{+}^{\alpha c}(v)\opT_{-}^{\nu b}(u)\opT_{-}^{\beta d}(v)
	 &=&\opT_{+}^{ia}(u)\opT_{+}^{\alpha b}(v)\opT_{-}^{\nu a}(u)\opT_{-}^{\beta b}(v)\nonumber\\
	 &\;&-\opT_{+}^{ia}(u)\opT_{+}^{\alpha b}(v)\opT_{-}^{\nu b}(u) \opT_{-}^{\beta a}(v)\,.
\eea
The identity (\ref{epsid}) is verified by explicitly calculating  both sides. Combined with the definition of the quantum determinants of the quantum transition matrices $T_{\pm}$ in the following form
\be
	\qdet \opT_{\pm}(u)\epsilon_{ij}=\epsilon_{ab}\opT^{ia}_{\pm}(u)\opT^{jb}_{\pm}(\hat{u})
\ee
it allows to simplify the expression (\ref{qMlimitTens}) further to get
\be
	\lim\limits_{s\rightarrow 0} s \qM^{ij}(u,s)=\frac{3}{2}\qdet \opT_{+}(u)\qdet \opT_{-}(u) \delta_{ij}
\ee

\end{document}